\newtheorem{theorem}{Theorem}[section]
\newtheorem{corollary}[theorem]{Corollary}
\newtheorem{definition}[theorem]{Definition}
\newtheorem{remark}[theorem]{Remark}
\newtheorem{assumption}{Assumption}
\numberwithin{equation}{section}
\newcommand{\R}{{\mathbb{R}}}
\newcommand{\N}{{\mathbb{N}}}
\newcommand{\Let}{:=}
\begin{document}

\begin{abstract}
This paper is concerned with a compositional approach for constructing finite Markov decision processes of interconnected discrete-time stochastic control systems. The proposed approach leverages the interconnection topology and a notion of so-called \emph{stochastic storage functions} describing joint dissipativity-type properties of subsystems and their abstractions.
In the first part of the paper, we derive dissipativity-type compositional conditions for quantifying the error between the interconnection of stochastic control subsystems and that of their abstractions. In the second part of the paper, we propose an approach to construct finite Markov decision processes together with their corresponding stochastic storage functions for classes of discrete-time control systems satisfying some \emph{incremental passivablity} property. Under this property, one can construct finite Markov decision processes by a suitable discretization of the input and state sets. Moreover, we show that for linear stochastic control systems, the aforementioned property can be readily checked by some matrix inequality. We apply our proposed results to the temperature regulation in a circular building by constructing compositionally a finite Markov decision process of a network containing $200$ rooms in which the compositionality condition does not require any constraint on the number or gains of the subsystems. We employ the constructed finite Markov decision process as a substitute to synthesize policies regulating the temperature in each room for a bounded time horizon.
\end{abstract}

\title[From Dissipativity Theory to Compositional Construction of Finite Markov Decision Processes]{From Dissipativity Theory to Compositional Construction of Finite Markov Decision Processes}

\author{Abolfazl Lavaei$^1$}
\author{Sadegh Soudjani$^2$}
\author{Majid Zamani$^1$}
\address{$^1$Hybrid Control Systems Group, Technical University of Munich, Germany.}
\email{lavaei@tum.de, zamani@tum.de}
\address{$^2$School of Computing, Newcastle University, UK.}
\email{sadegh.soudjani@ncl.ac.uk}
\maketitle

\section{Introduction}

Large-scale interconnected systems have received significant attentions in the last few years due to their presence in real life systems including power networks, air traffic control, and so on. Each complex real-world system can be regarded as an interconnected system composed of several subsystems.
Since these large-scale networks of systems are inherently difficult to analyze and control, one can develop compositional schemes to employ the abstractions of the given subsystems as a replacement in the controller design process. Those abstractions allow us to design controllers for them, and then refine the controllers to the ones for the concrete subsystems, while provide us with the quantified errors for the overall interconnected system in this controller synthesis detour.

Construction of finite abstractions was introduced in recent years as a method to reduce the complexity of controller synthesis problems in particular for enforcing complex logical properties. Finite abstractions are abstract descriptions of the continuous-space control systems in which each discrete state corresponds to a collection of continuous states of the original system. Since the abstractions are finite, algorithmic approaches from computer science are applicable to synthesize controllers enforcing complex logic properties including those expressed as linear temporal logic formulae.

In the past few years, there have been several results on the construction of (in)finite abstractions for stochastic systems.
Existing results for \emph{continuous-time} systems include infinite approximation techniques for jump-diffusion systems \cite{julius2009approximations}, finite bisimilar abstractions for incrementally stable stochastic switched systems \cite{zamani2015symbolic} and randomly switched stochastic systems \cite{zamani2014approximately}, and finite bisimilar abstractions for incrementally stable stochastic control systems without discrete dynamics \cite{zamani2014symbolic}.
Recently, compositional construction of infinite abstractions is discussed in \cite{zamani2016approximations} using small-gain type conditions
and of finite bisimilar abstractions in \cite{2017arXiv170909546M} based on a new notion of disturbance bisimilarity relation.

For \emph{discrete-time} stochastic models with continuous state spaces, finite approximations are initially proposed in \cite{APLS08} for formal verification and synthesis of this class of systems. The algorithms are improved in terms of scalability in \cite{SA13,SSoudjani}. Those techniques have been implemented in the tool \texttt{FAUST} \cite{FAUST15}. Extension of the techniques to infinite horizon properties is proposed in \cite{tkachev2011infinite} and formal abstraction-based policy synthesis is discussed in \cite{tmka2013}. Recently, compositional construction of finite abstractions is discussed in \cite{SAM15} using dynamic Bayesian networks and infinite abstractions in \cite{lavaei2017compositional} using small-gain type conditions both for discrete-time stochastic control systems. Our proposed approach extends the abstraction techniques in \cite{SAM15} from verification to synthesis, by proposing a different quantification of the abstraction error, and leveraging the dissipativity properties of subsystems and structure of interconnection topology to show the compositonal results for the finite Markov decision processes.
Although the results in \cite{lavaei2017compositional} deal only with infinite abstractions (reduced order models), our proposed approach considers finite Markov decision processes as abstractions which are the main tools for automated synthesis of controllers for complex logical properties. To the best of our knowledge, this is the first time a closed form dynamical representation of the abstract finite Markov decision processes is used to facilitate the use of dissipativity properties of subsystems in the error quantification.
%


In particular, we provide a compositional approach for the construction of finite Markov decision processes of interconnected discrete-time stochastic control systems. The proposed compositional technique leverages the interconnection structure and joint dissipativity-type properties of subsystems and their abstractions characterized via a notion of so-called \emph{stochastic storage functions}. The provided compositionality conditions can enjoy the structure of interconnection topology and be potentially satisfied independently of the number or gains of the subsystems (cf. case study section). The stochastic storage functions of subsystems are utilized to quantify the error in probability between the interconnection of concrete stochastic subsystems and that of their finite Markov decision processes. As a consequence, one can leverage the proposed results here to solve particularly safety/reachability problems over the finite interconnected systems and then carry the results over the concrete interconnected ones.

We also propose an approach to construct finite Markov decision processes together with their corresponding stochastic storage functions for classes of stochastic control subsystems satisfying some \emph{incremental passivability} property. Under this property, one can construct a finite Markov decision process by a suitable discretization of the input and state sets. Moreover, we show that for linear stochastic control systems, the mentioned property can be readily verified by some matrix inequality. Finally, we illustrate the effectiveness of the results using the temperature regulation in a circular building by constructing compositionally a finite Markov decision process of a network containing $200$ rooms in which the compositionality condition does not require any constraint on the number or gains of the subsystems. We leverage the constructed finite Markov decision process as a substitute to synthesize policies regulating the temperature in each room for a bounded time horizon. We benchmark our results against the compositional abstraction technique of \cite{SAM15} which is based on construction of finite dynamic Bayesian networks. 


\section{Discrete-Time Stochastic Control Systems}

\subsection{Preliminaries}
We consider a probability space $(\Omega,\mathcal F_{\Omega},\mathbb{P}_{\Omega})$,
where $\Omega$ is the sample space,
$\mathcal F_{\Omega}$ is a sigma-algebra on $\Omega$ comprising subsets of $\Omega$ as events,
and $\mathbb{P}_{\Omega}$ is a probability measure that assigns probabilities to events.
We assume that random variables introduced in this article are measurable functions of the form $X:(\Omega,\mathcal F_{\Omega})\rightarrow (S_X,\mathcal F_X)$.
Any random variable $X$ induces a probability measure on  its space $(S_X,\mathcal F_X)$ as $Prob\{A\} = \mathbb{P}_{\Omega}\{X^{-1}(A)\}$ for any $A\in \mathcal F_X$.
We often directly discuss the probability measure on $(S_X,\mathcal F_X)$ without explicitly mentioning the underlying probability space and the function $X$ itself.

A topological space $S$ is called a Borel space if it is homeomorphic to a Borel subset of a Polish space (i.e., a separable and completely metrizable space).
Examples of a Borel space are the Euclidean spaces $\mathbb R^n$, its Borel subsets endowed with a subspace topology, as well as hybrid spaces.
Any Borel space $S$ is assumed to be endowed with a Borel sigma-algebra, which is
denoted by $\mathcal B(S)$. We say that a map $f : S\rightarrow Y$ is measurable whenever it is Borel measurable.


\subsection{Notation}

The following notation is used throughout the paper. We denote the set of nonnegative integers by $\mathbb N := \{0,1,2,\ldots\}$ and the set of positive integers by $\mathbb N_{\ge 1} := \{1,2,3,\ldots\}$. 
The symbols $\R$, $\R_{>0}$, and $\R_{\ge 0}$ denote the set of real, positive and nonnegative real numbers, respectively.
For any set $X$ we denote by $2^X$ the power set of $X$ that is the set of all subsets of $X$.
Given $N$ vectors $x_i \in \R^{n_i}$, $n_i\in \mathbb N_{\ge 1}$, and $i\in\{1,\ldots,N\}$, we use $x = [x_1;\ldots;x_N]$ to denote the corresponding vector of dimension $\sum_i n_i$.
Given a vector $x\in\mathbb{R}^{n}$, $\Vert x\Vert$ denotes the Euclidean norm of $x$. The symbol $I_n$ denotes the identity matrix in $\R^{n\times{n}}$. Also, the identity map on a set $A$ in denoted by $\mathbf{1}_A$. We denote by $\mathsf{diag}(a_1,\ldots,a_N)$ a diagonal matrix in $\R^{N\times{N}}$ with diagonal matrix entries $a_1,\ldots,a_N$ starting from the upper left corner. Given functions $f_i:X_i\rightarrow Y_i$,
for any $i\in\{1,\ldots,N\}$, their Cartesian product $\prod_{i=1}^{N}f_i:\prod_{i=1}^{N}X_i\rightarrow\prod_{i=1}^{N}Y_i$ is defined as $(\prod_{i=1}^{N}f_i)(x_1,\ldots,x_N)=[f_1(x_1);\ldots;f_N(x_N)]$.
For any set $A$ we denote by $A^{\mathbb N}$ the Cartesian product of a countable number of copies of $A$, i.e., $A^{\mathbb N} = \prod_{k=0}^{\infty} A$.
Given a measurable function $f:\mathbb N\rightarrow\mathbb{R}^n$, the (essential) supremum of $f$ is denoted by $\Vert f\Vert_{\infty} \Let \text{(ess)sup}\{\Vert f(k)\Vert,k\geq 0\}$. A function $\gamma:\mathbb\R_{0}^{+}\rightarrow\mathbb\R_{0}^{+}$, is said to be a class $\mathcal{K}$ function if it is continuous, strictly increasing, and $\gamma(0)=0$. A class $\mathcal{K}$ function $\gamma(\cdot)$ is said to be a class $\mathcal{K}_{\infty}$ if
$\lim_{r\rightarrow\infty}\gamma(r) = \infty$.

\subsection{Discrete-Time Stochastic Control Systems}
We consider stochastic control systems in discrete time (dt-SCS) defined over a general state space
and characterized by the tuple
\begin{equation}
\label{eq:dt-SCS}
\Sigma\!=\!\left(X,U,W,\varsigma,f,Y_1,Y_2, h_1, h_2\right)\!,
\end{equation}
where $X$ is a Borel space as the state space of the system.
We denote by $(X, \mathcal B (X))$ the measurable space
with $\mathcal B (X)$  being  the Borel sigma-algebra on the state space. Sets
$U$ and $W$ are Borel spaces as the \emph{external} and \emph{internal} input spaces of the system.
Notation $\varsigma$ denotes a sequence of independent and identically distributed (i.i.d.) random variables on a set $V_\varsigma$
\begin{equation*}
\varsigma:=\{\varsigma(k):\Omega\rightarrow V_{\varsigma},\,\,k\in\N\}.
\end{equation*}
The map $f:X\times U\times W\times V_{\varsigma} \rightarrow X$ is a measurable function characterizing the state evolution of the system.	
Finally, sets $Y_1$ and $Y_2$ are Borel spaces as the external and internal output spaces of the system, respectively.
Maps $h_1:X\rightarrow Y_1$ and $h_2:X\rightarrow Y_2$ are measurable functions that map a state $x\in X$ to its external and internal outputs $y_1 = h_1(x)$ and $y_2 = h_2(x)$, respectively.

For given initial state $x(0)\in X$ and input sequences $\nu(\cdot):\mathbb N\rightarrow U$ and $w(\cdot):\mathbb N\rightarrow W$, evolution of the state of dt-SCS $\Sigma$ can be written as
\begin{equation}\label{Eq_1a}
\Sigma:\left\{\hspace{-1.5mm}\begin{array}{l}x(k+1)=f(x(k),\nu(k),w(k),\varsigma(k))\\
y_1(k)=h_1(x(k))\\
y_2(k)=h_2(x(k)),\\
\end{array}\right.
\quad k\in\mathbb N.
\end{equation}


Given the dt-SCS in \eqref{eq:dt-SCS}, we are interested in \emph{Markov policies} to control the system.
\begin{definition}
	A Markov policy for the dt-SCS $\Sigma$ in \eqref{eq:dt-SCS} is a sequence
	$\rho = (\rho_0,\rho_1,\rho_2,\ldots)$ of universally measurable stochastic kernels $\rho_n$ \cite{BS96},
	each defined on the input space $U$ given $X\times W$ and such that for all $(x_n,w_n)\in X\times W$, $\rho_n(U(x_n,w_n)|(x_n,w_n))=1$.
	The class of all such Markov policies is denoted by $\Pi_M$. 
\end{definition} 

%
We associate respectively to $U$ and $W$ the sets $\mathcal U$ and $\mathcal W$ to be collections of sequences $\{\nu(k):\Omega\rightarrow U,\,\,k\in\N\}$ and $\{w(k):\Omega\rightarrow W,\,\,k\in\N\}$, in which $\nu(k)$ and $w(k)$ are independent of $\varsigma(t)$ for any $k,t\in\mathbb N$ and $t\ge k$.
For any initial state $a\in X$, $\nu(\cdot)\in\mathcal{U}$, and $w(\cdot)\in\mathcal{W}$,
the random sequences $x_{a\nu w}:\Omega \times\N \rightarrow X$, $y^1_{a\nu w}:\Omega \times \N \rightarrow Y_1$ and $y^2_{a\nu w}:\Omega \times \N \rightarrow Y_2$ that satisfy \eqref{Eq_1a}
are called respectively the \textit{solution process} and external and internal \textit{output trajectory} of $\Sigma$ under external input $\nu$, internal input $w$ and initial state $a$.


In this sequel we assume that the state space $X$ of $\Sigma$ is a subset of $\mathbb R^n$. System $\Sigma$ is called finite if $ X, U, W$ are finite sets and infinite otherwise.
In this paper we are interested in studying interconnected discrete-time stochastic control systems without internal inputs and outputs that result from the interconnection of dt-SCS having both internal and external inputs and outputs. In this case, the interconnected dt-SCS without internal input and output in indicated by the simplified tuple $(X,U,\varsigma,f,Y,h)$ with $f:X\times U\times V_\varsigma\rightarrow X$.

\subsection{General Markov Decision Processes}
\label{subsec:MDP}
A dt-SCS $\Sigma$ in \eqref{eq:dt-SCS} can be \emph{equivalently} represented as a general Markov decision process (gMDP) \cite{SIAM17}
\begin{equation}\notag
\Sigma\!=\!\left(X,W,U,T_{\mathsf x},Y_1,Y_2,h_1, h_2\right),	
\end{equation}
where the map $T_{\mathsf x}:\mathcal B(X)\times X\times U\times W\rightarrow[0,1]$,
is a conditional stochastic kernel that assigns to any $x \in X$, $w\in W$ and $\nu\in U$ a probability measure $T_{\mathsf x}(\cdot | x,\nu, w)$
on the measurable space
$(X,\mathcal B(X))$
so that for any set $A \in \mathcal B(X)$, 
$$\mathbb P (x(k+1)\in A\,|\, x(k),\nu(k),w(k)) = \int_A T_{\mathsf x} (d\bar x|x(k),\nu(k),w(k)).$$




%

For given inputs $\nu(\cdot), w(\cdot),$  the stochastic kernel $T_{\mathsf x}$ captures the evolution of the state of $\Sigma$ and can be uniquely determined by the pair $(\varsigma,f)$ from \eqref{eq:dt-SCS}.


The alternative representation as gMDP is utilized in \cite{SAM15} to approximate a dt-SCS $\Sigma$ with a \emph{finite} $\widehat\Sigma$. Algorithm~~\ref{algo:MC_app} adapted from \cite{SAM15} with some modifications presents this approximation. The algorithm first constructs a finite partition of state set $X$ and input sets $U$, $W$.
Then representative points $\bar x_i\in \mathsf X_i$, $\bar \nu_i\in \mathsf U_i$ and $\bar w_i\in \mathsf W_i$ are selected as abstract states and inputs.
Transition probabilities in the finite gMDP $\widehat\Sigma$ are also computed according to \eqref{eq:trans_prob}. The output maps $\hat h_1,\hat h_2$ are the same as $h_1,h_2$ with their domain restricted to finite state set $\hat X$ (cf. Step \ref{step:output_map}) and the output sets $\hat Y_1,\hat Y_2$ are just image of $\hat X$ under $h_1,h_2$, respectively (cf. Step \ref{step:output_space}).

\begin{algorithm}[h]
	\caption{Abstraction of dt-SCS $\Sigma$ by a finite gMDP $\widehat\Sigma$}
	\label{algo:MC_app}
	\begin{center}
		\begin{algorithmic}[1]
			\REQUIRE 
			input dt-SCS $\Sigma\!=\!\left(X,W,U,T_{\mathsf x},Y_1,Y_2,h_1, h_2\right)$
			\STATE
			Select finite partitions of sets $X,U,W$ as $X = \cup_{i=1}^{n_x} \mathsf X_i$, $U = \cup_{i=1}^{n_\nu} \mathsf U_i$, $W = \cup_{i=1}^{n_w} \mathsf W_i$				
			\STATE
			For each $\mathsf X_i,\mathsf U_i$, and $\mathsf W_i$, select single representative points $x_i \in \mathsf X_i$, $\nu_i \in \mathsf U_i$, $w_i \in \mathsf W_i$
			\STATE
			Define 
			$\hat X := \{x_i, i=1,...,n_x\}$ as the finite state set of gMDP~$\widehat\Sigma$ with external and internal input sets
			$\hat U := \{\nu_i, i=1,...,n_\nu\}$ $\hat W := \{w_i, i=1,...,n_w\}$
			\STATE
			\label{step:refined}
			Define the map $\Xi:X\rightarrow 2^X$ that assigns to any $x\in X$, the corresponding partition set it belongs to, i.e.,
			$\Xi(x) = \mathsf X_i$ if $x\in \mathsf X_i$ for some $i=1,2,\ldots,n_x$
			\STATE
			Compute the discrete transition probability matrix $\hat T_{\mathsf x}$ for $\widehat\Sigma$ as:
			\begin{equation}
			\label{eq:trans_prob}
			\hat T_{\mathsf x} (x'|x,\nu,w) 
			= T_{\mathsf x} (\Xi(x')|x,\nu,w),
			\end{equation}
			for all $x,x'\in \hat X, \nu\in \hat U, w\in\hat W$
			\STATE
			\label{step:output_space}
			Define output spaces $\hat Y_1 := h_1(\hat X), \hat Y_2 := h_2(\hat X)$
			\STATE
			\label{step:output_map}
			Define output maps $\hat h_1 := h_1|_{\hat X}$ and $\hat h_2 := h_2|_{\hat X}$
			\ENSURE
			output finite gMDP $\widehat\Sigma = (\hat X, \hat U,\hat W, \hat T_{\mathsf x}, \hat Y_1, \hat Y_2, \hat h_1, \hat h_2)$
		\end{algorithmic}
	\end{center}
\end{algorithm}

In the following theorem we give a dynamical representation of the finite gMDP, which is more suitable for the study of this paper.
\begin{theorem}\label{Def154}
	Given a dt-SCS $\Sigma=\left(X,U,W,\varsigma,f,Y_1,Y_2, h_1, h_2\right)$,
	the finite gMDP $\widehat\Sigma$ constructed in Algorithm~\ref{algo:MC_app} can be represented as 		
	\begin{equation}
	\label{eq:abs_tuple}
	\hat\Sigma =(\hat X, \hat U,\hat W, \varsigma,\hat f,\hat Y_1,\hat Y_2,\hat h_1,\hat h_2),
	\end{equation}
	where $\hat f:\hat X\times\hat U\times\hat W\times V_\varsigma\rightarrow\hat X$ is defined as
	\begin{equation*}
	\hat f(\hat{x},\hat{\nu},\hat{w},\varsigma) = \Pi(f(\hat{x},\hat{\nu},\hat{w},\varsigma)),	
	\end{equation*}
	and $\Pi:X\rightarrow \hat X$ is the map that assigns to any $x\in X$, the representative point $\hat x\in\hat X$ of the corresponding partition set containing $x$.
	The initial state of $\widehat\Sigma$ is also selected according to $\hat x_0 := \Pi(x_0)$ with $x_0$ being the initial state of $\Sigma$. 
\end{theorem}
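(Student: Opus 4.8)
The plan is to verify that the two descriptions of $\widehat\Sigma$---the gMDP given by the transition kernel $\hat T_{\mathsf x}$ of \eqref{eq:trans_prob}, and the dt-SCS given by the map $\hat f$ in \eqref{eq:abs_tuple}---induce the same conditional stochastic kernel on $(\hat X,\mathcal B(\hat X))$. Since, as recalled just after the gMDP definition, a dt-SCS and its gMDP are equivalent precisely when their transition kernels agree, and since the output spaces $\hat Y_1,\hat Y_2$ and output maps $\hat h_1,\hat h_2$ are defined identically in Algorithm~\ref{algo:MC_app} and in \eqref{eq:abs_tuple}, it suffices to show that the pair $(\varsigma,\hat f)$ reproduces $\hat T_{\mathsf x}$.

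The crux is the relation between the two maps $\Pi$ and $\Xi$. For any representative point $x'\in\hat X$, the preimage $\Pi^{-1}(\{x'\})$ is exactly the partition cell $\Xi(x')$: if $x'$ is the representative of the cell $\mathsf X_i$, then $\Xi(x')=\mathsf X_i$, while $\Pi(y)=x'$ holds precisely when $y\in\mathsf X_i$ by the definition of $\Pi$. Using this, for any $\hat x\in\hat X$, $\hat\nu\in\hat U$, $\hat w\in\hat W$ and any $x'\in\hat X$, the kernel induced by $(\varsigma,\hat f)$ is
\begin{equation*}
\mathbb P\big(\hat f(\hat x,\hat\nu,\hat w,\varsigma)=x'\big) = \mathbb P\big(\Pi(f(\hat x,\hat\nu,\hat w,\varsigma))=x'\big) = \mathbb P\big(f(\hat x,\hat\nu,\hat w,\varsigma)\in\Xi(x')\big),
\end{equation*}
where the probability is taken over the law of $\varsigma$. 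By the definition of $T_{\mathsf x}$ as the kernel determined by $(\varsigma,f)$, the last expression equals $T_{\mathsf x}(\Xi(x')\mid\hat x,\hat\nu,\hat w)$, which is exactly $\hat T_{\mathsf x}(x'\mid\hat x,\hat\nu,\hat w)$ by \eqref{eq:trans_prob}. This establishes the equality of the two kernels.

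It remains to dispatch the subsidiary well-posedness points. First, $\hat f$ maps into $\hat X$ because $\Pi$ does; second, $\hat f$ is measurable as a composition of the measurable $f$ and $\Pi$, the latter being measurable since it is piecewise constant on the finitely many Borel cells of the partition. Finally, the choice $\hat x_0=\Pi(x_0)$ simply restates the abstraction rule of assigning $x_0$ to the representative of its cell. I expect the only genuinely delicate step to be the precise identification $\Pi^{-1}(\{x'\})=\Xi(x')$ together with its interplay with the convention in \eqref{eq:trans_prob}; once this set identity is pinned down, equality of the kernels follows immediately from the definition of $T_{\mathsf x}$, and everything else is bookkeeping.
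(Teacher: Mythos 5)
Your proposal is correct and follows essentially the same route as the paper's proof: both reduce the claim to verifying that the pair $(\varsigma,\hat f)$ reproduces the kernel $\hat T_{\mathsf x}$ of \eqref{eq:trans_prob}, via the identity $\{\Pi(f(\hat x,\hat\nu,\hat w,\varsigma))=x'\}=\{f(\hat x,\hat\nu,\hat w,\varsigma)\in\Xi(x')\}$ and the fact that $T_{\mathsf x}$ is the kernel induced by $(\varsigma,f)$. Your explicit statement of the set identity $\Pi^{-1}(\{x'\})=\Xi(x')$ and the measurability of $\hat f$ are welcome additions of rigor, but they do not change the argument.
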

\begin{proof}
	It is sufficient to show that \eqref{eq:trans_prob} holds for dynamical representation of $\widehat\Sigma$ in \eqref{eq:abs_tuple} and that of $\Sigma$.
	For any $x,x'\in\hat X$, $\nu\in \hat U$ and $w\in\hat W$,
	\begin{align*}
	\hat T_{\mathsf x} (x'|x,\nu,w) & = \mathbb P(x' = \hat f(x,\nu,w,\varsigma)) = \mathbb P(x' = \Pi(f(x,\nu,w,\varsigma)))
	= \mathbb P(f(x,\nu,w,\varsigma)\in\Xi(x')),
	\end{align*}
	where $\Xi(x')$ is the partition set with $x'$ as its representative point as defined in Step~\ref{step:refined} of Algorithm~\ref{algo:MC_app}. Using the probability measure $\vartheta(\cdot)$ of random variable $\varsigma$ we can write
	\begin{align*}
	\hat T_{\mathsf x} (x'|x,\nu,w) = \int_{\Xi(x')}f(x,\nu,w,\varsigma)d\vartheta(\varsigma) = T_{\mathsf x} (\Xi(x')|x,\nu,w),
	\end{align*}
	which completes the proof.
\end{proof}
Dynamical representation provided by Theorem~\ref{Def154} uses the map $\Pi:X\rightarrow \hat X$ that assigns to any $x\in X$, the representative point $\hat x\in\hat X$ of the corresponding partition set containing $x$.
This map satisfies the inequality
\begin{equation}
\label{eq:Pi_delta}
\Vert \Pi(x)-x\Vert \leq \delta,\quad \forall x\in X,
\end{equation}
where $\delta:=\sup\{\|x-x'\|,\,\, x,x'\in \mathsf X_i,\,i=1,2,\ldots,n_x\}$ is the discretization parameter. We use this inequality in Section~\ref{sec:constrcution_finite} for compositional construction of finite gMDPs.

Algorithm~\ref{algo:MC_app} is used in \cite{SAM15} for \emph{compositional verification} of interconnected dt-SCS.
In order to provide formal guarantee on the compositional approximation, \cite{SAM15} uses distance in probability as a metric. In other words, for a given specification $\varphi$ and accuracy level $\epsilon$, the discretization parameters for each subsystem can be selected a priori such that after composition
\begin{equation}
\label{eq:metric_lit}
|\mathbb P(\Sigma\vDash\varphi) - \mathbb P(\widehat\Sigma\vDash\varphi)|\le \epsilon,
\end{equation}
where $\epsilon$ depends on the horizon of formula $\varphi$, Lipschitz constants of the stochastic kernels of subsystems, discretization parameters, and structure of the interconnection (cf. \cite[Theorem~9]{SAM15}).

In the next sections, we provide an approach for \emph{compositional synthesis} of interconnected dt-SCS. We first define the notions of stochastic storage and simulation functions for quantifying the error between two dt-SCS and two interconnected dt-SCS without internal signals, respectively. Then we establish an explicit dynamical representation of finite $\widehat\Sigma$ constructed in \cite{SAM15} and show how it can be used to compare interconnections of dt-SCS and those of their finite abstract counterparts based on these new notions. Finally, in the example section, we synthesize policies for abstract dt-SCS locally and refine them back to the original dt-SCS while providing guarantees on the quality of the synthesized policies with respect to satisfaction of local specifications.
This guarantee is compared against the approach of \cite{SAM15} with the metric in \eqref{eq:metric_lit} in the example section.

\section{Stochastic Storage and Simulation Functions}
\label{sec:SPSF}
In this section, we first introduce a notion of so-called stochastic storage functions for dt-SCS with both internal and external inputs, which is adapted from the notion of storage functions from dissipativity theory. We then define a notion of so-called stochastic simulation functions for systems with only external inputs and outputs. We use these definitions to quantify closeness of two dt-SCS.
\begin{definition}\label{Def_1a}
	Consider dt-SCS $\Sigma =(X,U,W,\varsigma,f,Y_1,Y_2, h_1, h_2)$ and
	$\widehat\Sigma =(\hat X,\hat U,\hat W, \varsigma,\hat f, \hat Y_1, \hat Y_2, \hat h_1, \hat h_2)$
	where $\hat Y_1\subseteq Y_1$. A function $V:X\times\hat X\to\R_{\ge0}$ is
	called a stochastic storage function (SStF) from  $\widehat\Sigma$ to $\Sigma$ if there exist
	$\alpha\in\mathcal{K}_\infty$,~$\kappa\in \mathcal{K}$, $\rho_{\mathrm{ext}}\in\mathcal{K}_\infty\cup\{0\}$, constant $\psi \in\R_{\ge 0}$, matrices $G,\hat G,H$ of appropriate dimensions, and symmetric matrix $\bar X$ with conformal block partitions $\bar X^{ij}$, $i,j\in\{1,2\}$, such that
	for any $x\in X$ and $\hat x\in\hat X$ one has
	\begin{align}\label{Eq_2a}
	\alpha(\Vert h_1(x)-\hat h_1(\hat x)\Vert)\le V(x,\hat x),
	\end{align}
	and $\forall\hat\nu\in\hat U$  $\exists \nu\in U$ such that $\forall \hat w\in\hat W$ $\forall w\in W$ one obtains
	\begin{align}\notag\label{Eq_3a}
	\mathbb{E} \Big[V(f(x,\nu,w,\varsigma)&,\hat{f}(\hat x,\hat \nu,\hat w,\varsigma))\,\big|\,x,\hat{x}, \nu,\hat{\nu}, w,\hat w\Big]-V(x,\hat{x})\leq-\kappa(V(x,\hat{x}))\!+\! 
	\rho_{\mathrm{ext}}(\Vert\hat\nu\Vert)\!+\!\psi\\ 
	&\!+\!\begin{bmatrix}
	Gw-\hat G\hat w\\
	h_2(x)-H\hat h_2(\hat x)
	\end{bmatrix}^T
	\underbrace{\begin{bmatrix}
		\bar X^{11}&\bar X^{12}\\
		\bar X^{21}&\bar X^{22}
		\end{bmatrix}}_{\bar X:=}\begin{bmatrix}
	Gw-\hat G\hat w\\
	h_2(x)-H\hat h_2(\hat x)
	\end{bmatrix}\!\!.
	\end{align}
\end{definition}
If there exists an SStF $V$ from $\widehat\Sigma$ to $\Sigma$, this is denoted by $\widehat\Sigma\preceq_{\mathcal{S}}\Sigma$ and the control system $\widehat\Sigma$ is called an abstraction of concrete (original) system $\Sigma$. Note that $\widehat{\Sigma}$ may be finite or infinite depending on cardinalities of sets $\hat X,\hat U,\hat W$.

\begin{remark}
	The last term in inequality~\eqref{Eq_3a} is interpreted in dissipativity theory as the energy supply rate of the system \cite{2016Murat}. Here we choose this function to be quadratic which results in tractable compositional conditions later in the form of linear matrix (in)equalities.
\end{remark} 

\begin{remark}
	The second condition in Definition~\ref{Def_1a} implies implicitly the existence of a function $\nu=\nu_{\hat \nu}(x,\hat x,\hat \nu)$ for the satisfaction of \eqref{Eq_3a}. This function is called the \emph{interface function} and can be used to refine a synthesized policy $\hat\nu$ for $\widehat\Sigma$ to a policy $\nu$ for $\Sigma$. 
\end{remark}

Now, we modify the above notion for the interconnected dt-SCS without internal inputs and outputs.
\begin{definition}
	Consider two dt-SCS
	$\Sigma =(X,U,\varsigma,f, Y,h)$ and
	$\widehat\Sigma =(\hat X,\hat U,\varsigma,\hat f, \hat Y,\hat h)$ without internal inputs and outputs, where $\hat Y\subseteq Y$.
	A function $V:X\times\hat X\to\R_{\ge0}$ is
	called a \emph{stochastic simulation function} (SSF) from $\widehat\Sigma$  to $\Sigma$ if
	\begin{itemize}
		\item there exists $\alpha\in \mathcal{K}_{\infty}$ such that for all  $x\in X$ and $\hat x\in\hat X$,
		\begin{align}\label{eq:lowerbound2}
		\alpha(\Vert h(x)-\hat h(\hat x)\Vert)\le V(x,\hat x),
		\end{align}
		\item for all  $x\in X,\,\hat x\in\hat X,\,\hat\nu\in\hat U$, there exists $\nu\in U$ such that
		\begin{align}\label{eq6666}
		&\mathbb{E} \Big[V(f(x,\nu,\varsigma),\hat{f}(\hat x,\hat \nu,\varsigma))\,\big|\,x,\hat{x},\nu, \hat{\nu}\Big]-V(x,\hat{x})\leq-\kappa(V(x,\hat{x}))
		\!+\!\rho_{\mathrm{ext}}(\Vert\hat\nu\Vert)\!+\!\psi,
		\end{align}
		for some $\kappa\in \mathcal{K}$, $\rho_{\mathrm{ext}} \in \mathcal{K}_{\infty}\cup \{0\}$, and $\psi \in\R_{\ge 0}$.
	\end{itemize}
\end{definition}
If there exists an SSF $V$ from $\widehat\Sigma$ to $\Sigma$, this is denoted by $\widehat\Sigma\preceq\Sigma$ and $\widehat\Sigma$ is called an abstraction of $\Sigma$.


The next theorem shows usefulness of SSF in comparing output trajectories of two dt-SCS in a probabilistic sense.  
\begin{theorem}\label{Thm_1a}
	Let
	$\Sigma =(X,U,\varsigma,f, Y,h)$ and
	$\widehat\Sigma =(\hat X,\hat U,\varsigma,\hat f, \hat Y,\hat h)$
	be two dt-SCS without internal inputs and outputs, where $\hat Y\subseteq Y$.
	Suppose $V$ is an SSF from $\widehat\Sigma$ to $\Sigma$, and there exists a constant $0<\widehat\kappa<1$ such that the function $\kappa \in \mathcal{K}$ in \eqref{eq6666} satisfies $\kappa(r)\geq\widehat\kappa r$ $\forall r\in\R_{\geq0}$. For any external input trajectory $\hat\nu(\cdot)\in\mathcal{\hat U}$ that preserves Markov property for the closed-loop $\widehat\Sigma$, and for any random variables $a$ and $\hat a$ as the initial states of the two dt-SCS,
	there exists an input trajectory $\nu(\cdot)\in\mathcal{U}$ of $\Sigma$ through the interface function associated with $V$ such that the following inequality holds	
	\begin{align}
	&\mathbb{P}\left\{\!\sup_{0\leq k\leq T_d}\!\!\Vert y_{a\nu}(k)-\hat y_{\hat a \hat\nu}(k)\Vert\!\geq\!\varepsilon\,|\,[a;\hat a]\right\}\!\leq\!
	\begin{cases}
	1\!-\!\Big(1\!-\!\frac{V(a,\hat a)}{\alpha\left(\varepsilon\right)}\Big)\Big(1\!-\!\frac{\widehat\psi}{\alpha\left(\varepsilon\right)}\Big)^{T_d} & \text{if}~\alpha\left(\varepsilon\right)\geq\frac{\widehat\psi}{\widehat\kappa},\\
	\Big(\frac{V(a,\hat a)}{\alpha\left(\varepsilon\right)}\Big)(1\!-\!\widehat\kappa)^{T_d}\!+\Big(\frac{\widehat\psi}{\widehat\kappa\alpha\left(\varepsilon\right)}\Big)(1\!-\!(1\!-\!\widehat\kappa)^{T_d}) & \text{if}~\alpha\left(\varepsilon\right)<\frac{\widehat\psi}{\widehat\kappa},
	\end{cases}.\label{Eq_25}
	\end{align}
	where the constant $\widehat\psi\geq0$ satisfies  $\widehat\psi\geq \rho_{\mathrm{ext}}(\Vert\hat \nu\Vert_{\infty})+\psi$.
\end{theorem}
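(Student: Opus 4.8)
The plan is to pass from the output trajectories to the simulation function and then to invoke a maximal inequality for a process with affine supermartingale drift. First I would use the lower bound \eqref{eq:lowerbound2}: since $\alpha\in\mathcal K_\infty$ is strictly increasing, the event $\{\|y_{a\nu}(k)-\hat y_{\hat a\hat\nu}(k)\|\ge\varepsilon\}$ is contained in $\{V(x(k),\hat x(k))\ge\alpha(\varepsilon)\}$ for every $k$, whence
\[
\mathbb P\Big\{\sup_{0\le k\le T_d}\|y_{a\nu}(k)-\hat y_{\hat a\hat\nu}(k)\|\ge\varepsilon\,\big|\,[a;\hat a]\Big\}\le\mathbb P\Big\{\sup_{0\le k\le T_d}V(x(k),\hat x(k))\ge\alpha(\varepsilon)\,\big|\,[a;\hat a]\Big\}.
\]
It therefore suffices to control the excursions of the scalar process $V_k:=V(x(k),\hat x(k))$ above the level $\alpha(\varepsilon)$.

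Second, I would turn the dissipation inequality \eqref{eq6666} into a one-step affine bound. Choosing $\nu(k)$ through the interface function associated with $V$, and using $\kappa(r)\ge\widehat\kappa r$ together with $\widehat\psi\ge\rho_{\mathrm{ext}}(\|\hat\nu\|_\infty)+\psi$, \eqref{eq6666} yields
\[
\mathbb E\big[V_{k+1}\,\big|\,\mathcal F_k\big]\le(1-\widehat\kappa)V_k+\widehat\psi,
\]
where $\mathcal F_k$ is the natural filtration. Here the hypothesis that $\hat\nu$ preserves the Markov property for the closed loop is precisely what makes $(x(k),\hat x(k))$ a Markov process, so that this conditional bound is well posed and propagates along trajectories.

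Third, and this is the technical core, I would derive the two-case estimate from the affine inequality by a stopping-time argument. Let $\tau:=\inf\{k:V_k\ge\alpha(\varepsilon)\}$, so the target probability is $\mathbb P\{\tau\le T_d\}$. The affine map $v\mapsto(1-\widehat\kappa)v+\widehat\psi$ has fixed point $\widehat\psi/\widehat\kappa$, and the two regimes in \eqref{Eq_25} correspond exactly to whether $\alpha(\varepsilon)$ lies above or below it. To remove the drift I would pass to $Z_k:=(1-\widehat\kappa)^{-k}(V_k-\widehat\psi/\widehat\kappa)$, which the one-step bound renders a supermartingale. When $\alpha(\varepsilon)\ge\widehat\psi/\widehat\kappa$ one has $V_\tau-\widehat\psi/\widehat\kappa\ge0$ on $\{\tau\le T_d\}$, and a maximal-inequality/optional-stopping argument on $Z_k$, refined by a step-by-step estimate of the survival probability via the Markov property, should yield the sharp product bound $1-(1-V(a,\hat a)/\alpha(\varepsilon))(1-\widehat\psi/\alpha(\varepsilon))^{T_d}$. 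When $\alpha(\varepsilon)<\widehat\psi/\widehat\kappa$ this product form is unavailable, and instead I would iterate the affine bound to obtain $\mathbb E[V_{T_d\wedge\tau}]\le(1-\widehat\kappa)^{T_d}V(a,\hat a)+\tfrac{\widehat\psi}{\widehat\kappa}(1-(1-\widehat\kappa)^{T_d})$ and combine it with the Markov-type estimate $\mathbb E[V_{T_d\wedge\tau}]\ge\alpha(\varepsilon)\,\mathbb P\{\tau\le T_d\}$ on the stopped process, giving the additive bound.

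The main obstacle I anticipate is the rigorous handling of the stopped supermartingale $Z_k$ in the first regime: because $Z_k$ is not sign-definite and carries the time-dependent factor $(1-\widehat\kappa)^{-k}$, extracting the \emph{exact} product constant $(1-V(a,\hat a)/\alpha(\varepsilon))(1-\widehat\psi/\alpha(\varepsilon))^{T_d}$ rather than a looser exponential estimate cannot be read off from a single application of Ville's or Doob's inequality and needs the finer recursive survival estimate. Verifying that the interface-induced input $\nu(\cdot)$ genuinely keeps $(x(k),\hat x(k))$ Markov, so that the conditional expectations telescope correctly, is the other point that must be checked with care.
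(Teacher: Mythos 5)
Your first two steps coincide exactly with the paper's own proof: the paper likewise passes from the output event to $\{V(x(k),\hat x(k))\ge\alpha(\varepsilon)\}$ via \eqref{eq:lowerbound2} and monotonicity of $\alpha\in\mathcal K_\infty$, and then disposes of everything else in one line by citing Theorem~3 of Kushner's 1967 book (p.~81) together with \eqref{eq6666}. That cited theorem \emph{is} the two-case bound \eqref{Eq_25} for a nonnegative process with affine drift $\mathbb{E}[V_{k+1}\,|\,\mathcal F_k]\le(1-\widehat\kappa)V_k+\widehat\psi$, which your second step correctly extracts from \eqref{eq6666}, $\kappa(r)\ge\widehat\kappa r$, and $\widehat\psi\ge\rho_{\mathrm{ext}}(\Vert\hat\nu\Vert_\infty)+\psi$; your observation that the Markov-policy hypothesis on $\hat\nu$ is what makes the drift propagate along the closed-loop pair process is exactly the role that hypothesis plays. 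So the only genuine divergence is that where the paper invokes a black box, you re-derive it.

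Your reconstruction has the right skeleton (hitting time $\tau:=\inf\{k:\,V_k\ge\alpha(\varepsilon)\}$, the fixed point $\widehat\psi/\widehat\kappa$ separating the regimes), but two points would not go through as written, and both are cured by the same device: freeze the process \emph{at the level} rather than merely stopping it, i.e.\ set $\tilde V_k:=V_k$ for $k<\tau$ and $\tilde V_k:=\alpha(\varepsilon)$ for $k\ge\tau$. In your second regime, iterating the affine bound on the raw stopped process $V_{k\wedge\tau}$ fails in general: on $\{\tau\le k\}$ one needs $V_\tau\le(1-\widehat\kappa)V_\tau+\widehat\psi$, i.e.\ $V_\tau\le\widehat\psi/\widehat\kappa$, and the overshoot of $V_\tau$ above $\alpha(\varepsilon)$ need not respect this. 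The capped $\tilde V_k$ does satisfy the recursion, precisely because $\alpha(\varepsilon)<\widehat\psi/\widehat\kappa$ in that regime; iterating and using $\mathbb{E}[\tilde V_{T_d}]\ge\alpha(\varepsilon)\,\mathbb{P}\{\tau\le T_d\}$ then gives the additive bound. In the first regime, the exact product constant you rightly anticipated cannot be read off from $Z_k=(1-\widehat\kappa)^{-k}(V_k-\widehat\psi/\widehat\kappa)$ comes instead from the complementary capped process: one checks, on $\{\tau>k\}$ using $\tilde V_{k+1}\le V_{k+1}$ and the affine drift, and trivially on $\{\tau\le k\}$, that
\begin{equation*}
\mathbb{E}\big[\alpha(\varepsilon)-\tilde V_{k+1}\,\big|\,\mathcal F_k\big]\;\ge\;\Big(1-\tfrac{\widehat\psi}{\alpha(\varepsilon)}\Big)\big(\alpha(\varepsilon)-\tilde V_k\big),
\end{equation*}
the verification requiring exactly $\widehat\kappa\,\alpha(\varepsilon)\ge\widehat\psi$, which is the regime boundary. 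Iterating over $T_d$ steps and bounding $\mathbb{E}[\alpha(\varepsilon)-\tilde V_{T_d}]\le\alpha(\varepsilon)\,\mathbb{P}\{\tau>T_d\}$ yields $\mathbb{P}\{\tau>T_d\}\ge\big(1-V(a,\hat a)/\alpha(\varepsilon)\big)\big(1-\widehat\psi/\alpha(\varepsilon)\big)^{T_d}$, i.e.\ the sharp product bound. With these two repairs your proposal becomes a complete, self-contained proof of what the paper obtains by citation.
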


\begin{proof}
	Since $V$ is an SSF from $\widehat\Sigma$ to $\Sigma$, we have
	\begin{align}\notag
	\mathbb{P}&\left\{\sup_{0\leq k\leq T_d}\Vert y_{a\nu}(k)-\hat y_{\hat a \hat\nu}(k)\Vert\geq\varepsilon\,|\,[a;\hat a]\right\}
	=\mathbb{P}\left\{\sup_{0\leq k\leq T_d}\alpha\left(\Vert y_{a\nu}(k)-\hat y_{\hat a \hat\nu}(k)\Vert\right)\geq\alpha(\varepsilon)\,|\,[a;\hat a]\right\}\\\label{eq:supermart}
	\leq&\mathbb{P}\left\{\sup_{0\leq k\leq T_d}V\left(x_{a\nu}(k),\hat x_{\hat a \hat\nu}(k)\right)\geq\alpha(\varepsilon)\,|\,[a;\hat a]\right\}.
	\end{align}
	The equality holds due to $\alpha$ being a $\mathcal K_\infty$ function. The inequality is true due to condition~\eqref{eq:lowerbound2} on the SSF $V$. The results follows by applying Theorem 3 in \cite[pp. 81]{1967stochastic} to \eqref{eq:supermart} and utilizing inequality~\eqref{eq6666}.
\end{proof}
The results shown in Theorem \ref{Thm_1a} provide closeness of output trajectories of two dt-SCS in finite-time horizon. We can extend the result to infinite-time horizon given that $\widehat{\psi}=0$ as presented in the next corollary. 
\begin{corollary}
	Let $\Sigma$ and $\widehat\Sigma$ be two dt-SCS without internal inputs and outputs, where $\hat Y\subseteq Y$.
	Suppose $V$ is an SSF from $\widehat\Sigma$ to $\Sigma$ such that $\rho_{\mathrm{ext}}(\cdot)\equiv0$ and $\psi = 0$.
	For any external input trajectory $\hat\nu(\cdot)\in\mathcal{\hat U}$ preserving Markov property for the closed-loop $\widehat \Sigma$, and for any random variables $a$ and $\hat{a}$ as the initial states of the two dt-SCS, there exists $\nu(\cdot)\in{\mathcal{U}}$ of $\Sigma$ through the interface function associated with $V$ such that the following inequality holds:
	\begin{align}\nonumber
	\mathbb{P}\left\{\sup_{0\leq k< \infty}\Vert y_{a\nu}(k)-\hat y_{\hat a\hat\nu}(k)\Vert\ge \varepsilon\,|\,[a;\hat a]\right\}\leq\frac{V(a,\hat a)}{\alpha\left(\varepsilon\right)}.
	\end{align}
\end{corollary}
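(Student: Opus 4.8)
The plan is to obtain this corollary as a limiting case of Theorem~\ref{Thm_1a}. First I would observe that the hypotheses $\rho_{\mathrm{ext}}(\cdot)\equiv 0$ and $\psi=0$ force the constant $\widehat\psi$ in Theorem~\ref{Thm_1a} to satisfy $\widehat\psi\geq\rho_{\mathrm{ext}}(\Vert\hat\nu\Vert_\infty)+\psi=0$, so I may take $\widehat\psi=0$. To invoke Theorem~\ref{Thm_1a} I must also exhibit a constant $0<\widehat\kappa<1$ with $\kappa(r)\geq\widehat\kappa\,r$; here I would note that the $\kappa\in\mathcal K$ supplied by the SSF $V$ can be used exactly as in Theorem~\ref{Thm_1a}, and any such lower bound $\widehat\kappa$ will do. With $\widehat\psi=0$, the condition $\alpha(\varepsilon)\geq\widehat\psi/\widehat\kappa=0$ holds for every $\varepsilon>0$ and every admissible $\widehat\kappa$, so only the \emph{first} branch of the piecewise bound in \eqref{Eq_25} is ever active.

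Next I would substitute $\widehat\psi=0$ into that first branch. The factor $\big(1-\widehat\psi/\alpha(\varepsilon)\big)^{T_d}$ collapses to $1^{T_d}=1$, and the bound reduces to
\begin{align*}
\mathbb P\left\{\sup_{0\leq k\leq T_d}\Vert y_{a\nu}(k)-\hat y_{\hat a\hat\nu}(k)\Vert\geq\varepsilon\,\big|\,[a;\hat a]\right\}
\leq 1-\Big(1-\frac{V(a,\hat a)}{\alpha(\varepsilon)}\Big)=\frac{V(a,\hat a)}{\alpha(\varepsilon)},
\end{align*}
which is precisely the claimed finite-horizon bound, now \emph{independent of $T_d$}. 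This $T_d$-uniformity is the crux of the argument: because the right-hand side no longer grows with the horizon, the bound survives the passage to the supremum over the whole time axis.

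The remaining step is to let $T_d\to\infty$. I would use the monotone continuity of probability measures: the events
\begin{align*}
A_{T_d}:=\Big\{\sup_{0\leq k\leq T_d}\Vert y_{a\nu}(k)-\hat y_{\hat a\hat\nu}(k)\Vert\geq\varepsilon\Big\}
\end{align*}
form an increasing sequence in $T_d$, and their union is exactly the event $\{\sup_{0\leq k<\infty}\Vert y_{a\nu}(k)-\hat y_{\hat a\hat\nu}(k)\Vert\geq\varepsilon\}$. By continuity from below, the conditional probability of $A_{T_d}$ given $[a;\hat a]$ converges to that of the infinite-horizon event. Since every term in this increasing sequence is bounded by the fixed quantity $V(a,\hat a)/\alpha(\varepsilon)$, the limit inherits the same bound, giving the corollary.

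I expect no genuine obstacle here, as the result is essentially a bookkeeping specialization of Theorem~\ref{Thm_1a}; the one point requiring minor care is making the $T_d\to\infty$ passage rigorous. Concretely, one must confirm that the interface-induced input $\nu(\cdot)\in\mathcal U$ and the hypothesis that $\hat\nu(\cdot)$ preserves the Markov property remain valid uniformly over all horizons (so that a single input trajectory works simultaneously for every $T_d$), and that the conditioning on $[a;\hat a]$ is compatible with the continuity-from-below argument. Once the single-branch reduction eliminates the $T_d$-dependence, these measure-theoretic details are routine.
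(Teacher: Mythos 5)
There is a genuine gap in your reduction to Theorem~\ref{Thm_1a}: that theorem carries the additional hypothesis that there exists a constant $0<\widehat\kappa<1$ with $\kappa(r)\geq\widehat\kappa r$ for all $r\in\R_{\geq 0}$, and the corollary does \emph{not} assume this. You assert that ``the $\kappa\in\mathcal{K}$ supplied by the SSF can be used exactly as in Theorem~\ref{Thm_1a}, and any such lower bound $\widehat\kappa$ will do,'' but a general class $\mathcal{K}$ function need not admit any positive linear minorant: for instance $\kappa(r)=r/(1+r)$ satisfies $\kappa(r)/r\rightarrow 0$ as $r\rightarrow\infty$, and $\kappa(r)=r^2$ violates $\kappa(r)\geq\widehat\kappa r$ near $r=0$, so in such cases no admissible $\widehat\kappa$ exists and your invocation of Theorem~\ref{Thm_1a} never gets off the ground, even though the corollary's conclusion still holds. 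The paper's proof sidesteps this entirely: with $\rho_{\mathrm{ext}}\equiv 0$ and $\psi=0$, inequality \eqref{eq6666} gives $\mathbb{E}\big[V(x(k+1),\hat x(k+1))\,\big|\,\cdot\big]\leq V(x(k),\hat x(k))-\kappa(V(x(k),\hat x(k)))\leq V(x(k),\hat x(k))$, which uses only $\kappa\geq 0$; hence $V\big(x_{a\nu}(k),\hat x_{\hat a\hat\nu}(k)\big)$ is a nonnegative supermartingale, and the maximal inequality for nonnegative supermartingales, $\mathbb{P}\left\{\sup_{0\leq k<\infty}V_k\geq\lambda\right\}\leq V_0/\lambda$ with $\lambda=\alpha(\varepsilon)$, applied after the same reduction from output distances to $V$-level sets via \eqref{eq:lowerbound2} as in the proof of Theorem~\ref{Thm_1a}, delivers the infinite-horizon bound in a single step, with no limiting argument at all.

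To be clear about what is right in your proposal: \emph{if} one grants the linear minorant hypothesis (as holds, e.g., in the linear setting of Theorem~\ref{Thm_3a}, where $\kappa(s)=(1-\widehat\kappa)s$), then your argument is correct --- with $\widehat\psi=0$ only the first branch of \eqref{Eq_25} is active, the bound collapses to $V(a,\hat a)/\alpha(\varepsilon)$ independently of $T_d$, the interface function is defined pointwise in time so a single $\nu(\cdot)$ serves all horizons, and continuity from below of the (regular) conditional probability legitimately passes the $T_d$-uniform bound to the infinite-horizon event. But as a proof of the corollary as stated, the argument is incomplete unless you either add the hypothesis $\kappa(r)\geq\widehat\kappa r$ to the corollary or replace the detour through Theorem~\ref{Thm_1a} by the direct supermartingale argument above.
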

\begin{proof}
	Since $V$ is an SSF from $\widehat\Sigma$ to $\Sigma$ with $\rho_{\mathrm{ext}}(\cdot)\equiv0$ and $\psi = 0$, for any $x\in X$ and $\hat x\in\hat X$ and any $\hat\nu\in\hat U$, there exists $\nu\in U$ such that
	\begin{align}\notag
	&\mathbb{E} \Big[V(f(x,\nu,\varsigma),\hat{f}(\hat x,\hat \nu,\varsigma))\,\big|\,x,\hat{x},\nu, \hat{\nu}\Big]-V(x,\hat{x})\leq-\kappa(V(x,\hat{x})),
	\end{align}
	which makes the function $V\left(x_{a\nu}(k),\hat x_{\hat a \hat\nu}(k)\right)$ a nonnegative supermartingale \cite{oksendal2013stochastic} for the joint process $(x_{a\nu}(k),\hat x_{\hat a \hat\nu}(k))$. The rest of the proof follows from the proof of Theorem  \ref{Thm_1a} and the nonnegative supermartingale property \cite{1967stochastic}.
\end{proof}


\section{Compositional Abstractions for Interconnected Systems}
\label{sec:compositionality}
In this section, we analyze networks of stochastic control subsystems and show how to construct their abstractions together with the corresponding simulation functions by using abstractions and stochastic storage functions of the subsystems.

\subsection{Interconnected Stochastic Control Systems}
We first provide a formal definition of interconnection of discrete-time stochastic control subsystems.

\begin{definition}
	Consider $N\in\N_{\geq1}$ stochastic control subsystems $\Sigma_i=(X_i,U_i,W_i,\varsigma_i,f_i,Y_{1i},Y_{2i},h_{1i},h_{2i})$,
	$i\in\{1,\ldots,N\}$,
	and a matrix $M$ defining the coupling between these subsystems. We require the condition $M\prod_{i=1}^N Y_{2i} \subseteq \prod_{i=1}^N W_{i}$ to have a well-posed interconnection. The interconnection of  $\Sigma_i$,
	$\forall i\in \{1,\ldots,N\}$,
	is the dt-SCS $\Sigma=(X,U,\varsigma,f,Y,h)$, denoted by
	$\mathcal{I}(\Sigma_1,\ldots,\Sigma_N)$, such that $X:=\prod_{i=1}^{N}X_i$,  $U:=\prod_{i=1}^{N}U_i$, $f:=\prod_{i=1}^{N}f_{i}$, $Y:=\prod_{i=1}^{N}Y_{1i}$, and $h=\prod_{i=1}^{N}h_{1i}$, with the
	internal inputs constrained according to
	\begin{align}\notag
	[{w_{1};\ldots;w_{N}}]=M[{h_{21}(x_1);\ldots;h_{2N}(x_N)}].
	\end{align}
\end{definition}

In the above definition we allow the interconnection matrix $M$ to have real entries. This is a generalization of composition performed in \cite{2017arXiv170909546M} where the interconnection matrix takes only binary entries.

\subsection{Compositional Abstractions}

We assume that we are given $N$ stochastic control subsystems $\Sigma_i=(X_i,U_i,W_i,\varsigma_i,f_i,Y_{1i},Y_{2i},h_{1i},h_{2i})$ together with their corresponding abstractions $\widehat\Sigma_i\!=\!(\hat X_i,\hat U_i,\hat W_i, \varsigma_i,\hat f_i,\hat Y_{1i},\hat Y_{2i},\\,\hat h_{1i}, \hat h_{2i})$ with SStF $V_i$ from $\widehat\Sigma_i$ to $\Sigma_i$. Indicate by $\alpha_{i}$, $\kappa_i$, $\rho_{i\mathrm{ext}}$, $H_i$, $G_i$, $\hat G_i$, $\bar X_i$, $\bar X_i^{11}$, $\bar X_i^{12}$, $\bar X_i^{21}$, and $\bar X_i^{22}$, the corresponding functions, matrices, and the conformal block partitions appearing in Definition \ref{Def_1a}.
In the next theorem, as one of the main results of the paper, we provide sufficient conditions for having an SSF from the interconnection of abstractions $\widehat \Sigma=\mathcal{I}(\widehat\Sigma_1,\ldots,\widehat\Sigma_N)$ to that of concrete ones
$\Sigma=\mathcal{I}(\Sigma_1,\ldots,\Sigma_N)$. This theorem enables us to quantify in probability the error between the interconnection of stochastic control subsystems and that of their abstractions in a compositional manner by leveraging Theorem~\ref{Thm_1a}.
\begin{theorem}\label{Thm_2a}
	Consider the interconnected stochastic control system
	$\Sigma=\mathcal{I}(\Sigma_1,\ldots,\Sigma_N)$ induced by $N\in{\N_{\geq1}}$ stochastic
	control subsystems~$\Sigma_i$ and the coupling matrix $M$. Suppose that each stochastic control subsystem $\Sigma_i$ admits an abstraction $\widehat \Sigma_i$ with the corresponding SStF $V_i$. 	%
	Then the weighted sum
	\begin{equation}
	\label{eq:V_comp}
	V(x,\hat x)\Let\sum_{i=1}^N\mu_iV_i(x_i,\hat x_i)
	\end{equation}
	is a stochastic simulation function from the interconnected control system
	$\widehat \Sigma=\mathcal{I}(\widehat \Sigma_1,\ldots,\widehat\Sigma_N)$, with coupling matrix $\hat M$, to $\Sigma=\mathcal{I}(\Sigma_1,\ldots,\Sigma_N)$
	if $\mu_{i}>0$, $i\in\{1,\ldots,N\}$, and $\hat M$ satisfy matrix (in)equality and inclusion
	\begin{align}
	\begin{bmatrix}\label{Con_1a}
	GM\\I_{\tilde q}
	\end{bmatrix}^T \bar X&_{cmp}\begin{bmatrix}
	GM\\I_{\tilde q}
	\end{bmatrix}\preceq0,
	\\\label{Con_2a}
	GMH&=\hat G\hat M,\\
	\hat M\prod_{i=1}^N \hat Y_{2i} &\subseteq \prod_{i=1}^N \hat W_{i},\label{Con111}
	\end{align}
	where
	\begin{align}\notag
	\quad G:=\mathsf{diag}&(G_1,\ldots,G_N),~\hat G:=\mathsf{diag}(\hat G_1,\ldots,\hat G_N),\qquad\!\!\!\!\!\!\!\!\!\! H:=\mathsf{diag}(H_1,\ldots,H_N),\\\label{Def_3a}
	&\bar X_{cmp}\!\!:=\begin{bmatrix}
	\mu_1\bar X_1^{11}&&&\mu_1\bar X_1^{12}&&\\
	&\ddots&&&\ddots&\\
	&&\mu_N\bar X_N^{11}&&&\mu_N\bar X_N^{12}\\
	\mu_1\bar X_1^{21}&&&\mu_1\bar X_1^{22}&&\\
	&\ddots&&&\ddots&\\
	&&\mu_N\bar X_N^{21}&&&\mu_N\bar X_N^{22}
	\end{bmatrix}\!\!,
	\end{align}
	and $\tilde q=\sum_{i=1}^Nq_{2i}$ with $q_{2i}$ being the internal output dimensions of subsystems $\Sigma_i$.
\end{theorem}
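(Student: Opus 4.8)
The goal is to verify that the weighted sum $V(x,\hat x)=\sum_{i=1}^N\mu_iV_i(x_i,\hat x_i)$ satisfies the two defining conditions of a stochastic simulation function for the interconnected systems $\widehat\Sigma\preceq\Sigma$. I would split the argument into the lower-bound condition \eqref{eq:lowerbound2} and the supermartingale-type decay condition \eqref{eq6666}, handling them in that order since the first is essentially bookkeeping and the second carries all the analytic content.

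\textbf{Lower bound.} First I would establish \eqref{eq:lowerbound2}. Each SStF $V_i$ satisfies $\alpha_i(\Vert h_{1i}(x_i)-\hat h_{1i}(\hat x_i)\Vert)\le V_i(x_i,\hat x_i)$. Since the interconnected output is $h=\prod_{i=1}^N h_{1i}$, the quantity $\Vert h(x)-\hat h(\hat x)\Vert$ is controlled by $\max_i\Vert h_{1i}(x_i)-\hat h_{1i}(\hat x_i)\Vert$ up to dimension-dependent constants. The plan is to define a single class $\mathcal K_\infty$ function $\alpha$ by composing the $\alpha_i$ with the appropriate norm-equivalence bounds and the weights $\mu_i$; a standard construction (e.g.\ $\alpha(s)\Let\min_i\mu_i\alpha_i(s/\sqrt N)$ or a comparable choice) yields $\alpha(\Vert h(x)-\hat h(\hat x)\Vert)\le\sum_i\mu_i V_i(x_i,\hat x_i)=V(x,\hat x)$. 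This step is routine.

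\textbf{Decay condition.} The core of the proof is \eqref{eq6666}. I would take conditional expectation of $V$ along the interconnected dynamics and use linearity to write $\EE[V(f(x,\nu,\varsigma),\hat f(\hat x,\hat\nu,\varsigma))\mid\cdot]-V(x,\hat x)=\sum_i\mu_i\big(\EE[V_i(\cdots)\mid\cdot]-V_i(x_i,\hat x_i)\big)$, then apply inequality \eqref{Eq_3a} to each summand. This produces the negative terms $-\sum_i\mu_i\kappa_i(V_i)$, the external-input terms $\sum_i\mu_i\rho_{i\mathrm{ext}}(\Vert\hat\nu_i\Vert)$, the constants $\sum_i\mu_i\psi_i$, and crucially the aggregated quadratic supply-rate term. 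The next move is to stack the internal-input vectors and substitute the interconnection constraints $[w_1;\ldots;w_N]=M[h_{21}(x_1);\ldots;h_{2N}(x_N)]$ on the concrete side and the analogous $\hat M$ relation on the abstract side. Collecting the $N$ quadratic forms into the block matrix $\bar X_{cmp}$ of \eqref{Def_3a}, the quadratic contribution becomes a single quadratic form in the stacked vector $\big[GMh_2-\hat G\hat M\hat h_2;\,h_2-H\hat h_2\big]$. Here the matching condition \eqref{Con_2a}, $GMH=\hat G\hat M$, is exactly what is needed to rewrite $GMh_2-\hat G\hat M\hat h_2$ as $GM(h_2-H\hat h_2)$, so the entire quadratic form factors through the single vector $h_2-H\hat h_2$ premultiplied by $[GM;\,I_{\tilde q}]$. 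Condition \eqref{Con_1a} then guarantees this quadratic form is nonpositive and can be dropped.

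\textbf{Main obstacle and closing.} The delicate part is the algebraic manipulation of the aggregated quadratic term: one must carefully reorder the stacked internal-input and internal-output coordinates so that the block-diagonal-with-off-diagonal structure of $\bar X_{cmp}$ in \eqref{Def_3a} aligns with the interleaving produced by the coupling matrices $M$ and $\hat M$, and then verify that \eqref{Con_2a} collapses the cross terms precisely as required for the factorization through $[GM;I_{\tilde q}]^T\bar X_{cmp}[GM;I_{\tilde q}]$. This permutation/bookkeeping, rather than any deep inequality, is where an error is most likely to creep in. Once the quadratic term is eliminated via \eqref{Con_1a}, I would absorb $\sum_i\mu_i\kappa_i(V_i)$ into a single $\kappa\in\mathcal K$ (using, for instance, a lower bound of the form $\kappa(s)\le\sum_i\mu_i\kappa_i(V_i)$ whenever $\sum_i\mu_i V_i=s$, which exists because each $\kappa_i\in\mathcal K$), define $\rho_{\mathrm{ext}}$ dominating $\sum_i\mu_i\rho_{i\mathrm{ext}}(\Vert\hat\nu_i\Vert)$ in terms of $\Vert\hat\nu\Vert$, and set $\psi\Let\sum_i\mu_i\psi_i$. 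The well-posedness inclusion \eqref{Con111} ensures the abstract interconnection $\widehat\Sigma=\mathcal I(\widehat\Sigma_1,\ldots,\widehat\Sigma_N)$ is itself well-defined. Combining both verified conditions establishes that $V$ is an SSF from $\widehat\Sigma$ to $\Sigma$.
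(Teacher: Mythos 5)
Your proposal is correct and follows essentially the same route as the paper's proof: the lower bound \eqref{eq:lowerbound2} is obtained by combining the $\alpha_i$ into a single $\mathcal K_\infty$ function (your $\min_i\mu_i\alpha_i(s/\sqrt N)$ works and is a minor variant of the paper's $\alpha:=\bar\alpha^{-1}$ with $\bar\alpha(r):=\max\{\sum_i\alpha_i^{-1}(s_i)\mid s_i\ge0,\ \sum_i\mu_i s_i=r\}$), and the decay condition \eqref{eq6666} is proved exactly as you outline --- applying \eqref{Eq_3a} subsystem-wise with the interconnection constraints substituted for $w,\hat w$, using \eqref{Con_2a} to rewrite $GMh_2-\hat G\hat M\hat h_2=GM(h_2-H\hat h_2)$ so the aggregated supply rate factors through $[GM;I_{\tilde q}]^T\bar X_{cmp}[GM;I_{\tilde q}]$, discarding it via \eqref{Con_1a}, and closing with $\kappa(r):=\min\{\sum_i\mu_i\kappa_i(s_i)\mid s_i\ge0,\ \sum_i\mu_i s_i=r\}$, an analogous max-construction for $\rho_{\mathrm{ext}}$, and $\psi:=\sum_i\mu_i\psi_i$, matching the paper's definitions.
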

\begin{figure*}[ht]
	\rule{\textwidth}{0.1pt}
	\begin{align}\notag
	&\mathbb{E}\Big[\!\!\sum_{i=1}^N\!\mu_i\Big[V_i(f_i(x_i,\nu_i,w_i,\varsigma_i),\hat{f}_i(\hat x_i,\hat \nu_i,\hat w_i,\varsigma_i))|x,\hat x,\hat{\nu}\Big]\Big]\!\!-\!\!\sum_{i=1}^N\!\mu_iV_i(x_i,\hat x_i)\!=\!\!\sum_{i=1}^N\!\mu_i\mathbb{E}\Big[V_i(f_i(x_i,\nu_i,w_i,\varsigma_i),\hat{f}_i(\hat x_i,\hat \nu_i,\hat w_i,
	\\\notag
	&, \varsigma_i))\,|\,x,\hat x,\hat{\nu}\Big]\!-\! \sum_{i=1}^N\mu_iV_i(x_i,\hat x_i)\!=\! \sum_{i=1}^N\mu_i\mathbb{E}\Big[V_i(f_i(x_i,\nu_i,w_i,\varsigma_i),\hat{f}_i(\hat x_i,\hat \nu_i,\hat w_i,\varsigma_i))\,|\,x_i,\hat x_i,\hat{\nu}_i\Big]\!-\!\sum_{i=1}^N\mu_iV_i(x_i,\hat x_i)
	\\\notag
	&\leq\sum_{i=1}^N\mu_i\bigg(\!\!-\!\kappa_i(V_i( x_i,\hat x_i))\!+\!\rho_{i\mathrm{ext}}(\Vert \hat \nu_i\Vert)+\psi_i\!+\!\begin{bmatrix}
	G_iw_i-\hat G_i\hat w_i\\
	h_{2i}(x_i)-H_i\hat h_{2i}(\hat x_i)
	\end{bmatrix}^T\begin{bmatrix}
	\bar X_i^{11}&\bar X_i^{12}\\
	\bar X_i^{21}&\bar X_i^{22}
	\end{bmatrix}\begin{bmatrix}
	G_iw_i-\hat G_i\hat w_i\\
	h_{2i}(x_i)-H_i\hat h_{2i}(\hat x_i)
	\end{bmatrix}\bigg)
	\\\notag
	&=\sum_{i=1}^N-\mu_i\kappa_i(V_i( x_i,\hat x_i))+\sum_{i=1}^N\mu_i\rho_{i\mathrm{ext}}(\Vert \hat \nu_i\Vert)+\sum_{i=1}^N\mu_i\psi_i+\begin{bmatrix}
	G_1w_1-\hat G_1\hat w_1\\
	\vdots\\
	G_Nw_N-\hat G_N\hat w_N\\
	h_{21}(x_1)-H_1\hat h_{21}(\hat x_1)\\
	\vdots\\
	h_{2N}(x_N)-H_N\hat h_{2N}(\hat x_N)
	\end{bmatrix}^T\\\notag
	&\begin{bmatrix}
	\mu_1\bar X_1^{11}&&&\mu_1\bar X_1^{12}&&\\
	&\ddots&&&\ddots&\\
	&&\mu_N\bar X_N^{11}&&&\mu_N\bar X_N^{12}\\
	\mu_1\bar X_1^{21}&&&\mu_1\bar X_1^{22}&&\\
	&\ddots&&&\ddots&\\
	&&\mu_N\bar X_N^{21}&&&\mu_N\bar X_N^{22}
	\end{bmatrix}\begin{bmatrix}
	G_1w_1-\hat G_1\hat w_1\\
	\vdots\\
	G_Nw_N-\hat G_N\hat w_N\\
	h_{21}(x_1)-H_1\hat h_{21}(\hat x_1)\\
	\vdots\\
	h_{2N}(x_N)-H_N\hat h_{2N}(\hat x_N)
	\end{bmatrix}=
	\!\sum_{i=1}^N-\mu_i\kappa_i(V_i( x_i,\hat x_i))\\\notag
	&\!+\!\!\sum_{i=1}^N\mu_i\rho_{i\mathrm{ext}}(\Vert \hat \nu_i\Vert)\!+\!\!\sum_{i=1}^N\mu_i\psi_i
	\!+\!\!\begin{bmatrix}
	\!GM\!\begin{bmatrix}
	h_{21}(x_1)\\
	\vdots\\
	h_{2N}(x_N)
	\end{bmatrix}\!\!-\hat G \hat M\!\begin{bmatrix}
	\hat h_{21}(\hat x_1)\\
	\vdots\\
	\hat h_{2N}(\hat x_N)
	\end{bmatrix}\\
	h_{21}(x_1)-H_1\hat h_{21}(\hat x_1)\\
	\vdots\\
	h_{2N}(x_N)-H_N\hat h_{2N}(\hat x_N)
	\end{bmatrix}^T\!\!\!\!\!\!\bar X_{cmp}\!\!\begin{bmatrix}
	\!GM\!\begin{bmatrix}
	h_{21}(x_1)\\
	\vdots\\
	h_{2N}(x_N)
	\end{bmatrix}\!-\hat G \hat M\!\begin{bmatrix}
	\hat h_{21}(\hat x_1)\\
	\vdots\\
	\hat h_{2N}(\hat x_N)
	\end{bmatrix}\\
	h_{21}(x_1)-H_1\hat h_{21}(\hat x_1)\\
	\vdots\\
	h_{2N}(x_N)-H_N\hat h_{2N}(\hat x_N)
	\end{bmatrix}\\\notag
	&=\sum_{i=1}^N-\mu_i\kappa_i(V_i( x_i,\hat x_i))+\sum_{i=1}^N\mu_i\rho_{i\mathrm{ext}}(\Vert \hat \nu_i\Vert)+\sum_{i=1}^N\mu_i\psi_i
	+\begin{bmatrix}
	h_{21}(x_1)-H_1\hat h_{21}(\hat x_1)\\
	\vdots\\
	h_{2N}(x_N)-H_N\hat h_{2N}(\hat x_N)
	\end{bmatrix}^T\begin{bmatrix}
	GM\\
	I_{\tilde q}
	\end{bmatrix}^T\!\!\!\bar X_{cmp}\begin{bmatrix}
	GM\\
	I_{\tilde q}
	\end{bmatrix}\\\label{Eq_4a}
	&\begin{bmatrix}
	\!\!h_{21}(x_1)\!-\!H_1\hat h_{21}(\hat x_1)\!\!\\
	\vdots\\
	\!h_{2N}(x_N)\!-\!H_N\hat h_{2N}(\hat x_N)
	\end{bmatrix}\!\!\leq\!\!\sum_{i=1}^N\!\!-\mu_i\kappa_i(\!V_i( x_i,\hat x_i))\!+\!\!\sum_{i=1}^N\!\mu_i\rho_{i\mathrm{ext}}(\Vert \hat \nu_i\Vert)\!+\!\!\sum_{i=1}^N\!\mu_i\psi_i\!\leq\!\!\!-\kappa\left(\!V\!\left( x,\hat{x}\right)\right)\!\!+\!\!\rho_{\mathrm{ext}}(\left\Vert \hat \nu\right\Vert)\!\!+\!\!\psi.
	\end{align}
	\rule{\textwidth}{0.1pt}
	\vspace{-0.6cm}
\end{figure*}
\begin{proof}
	We first show that SSF $V$ in \eqref{eq:V_comp} satisfies the inequality \eqref{eq:lowerbound2} for some $\mathcal{K}_\infty$ function $\alpha$. For any $x=[{x_1;\ldots;x_N}]\in X$ and  $\hat x=[{\hat x_1;\ldots;\hat x_N}]\in \hat X$, one gets:
	\begin{align}\notag
	\Vert h(x)-\hat h(\hat x) \Vert\!=\!\Vert [h_{11}(x_1);\ldots;&h_{1N}(x_N)]\!-\![\hat h_{11}(\hat x_1);\ldots;\hat h_{1N}(\hat x_N)]\Vert\le\sum_{i=1}^N \Vert  h_{1i}(\hat x_i)-\hat h_{1i}(x_i) \Vert\\\notag
	&\le \sum_{i=1}^N \alpha_{i}^{-1}(V_i( x_i, \hat x_i))\le \bar\alpha(V(x,\hat x)),
	\end{align}
	with function $\bar\alpha:\mathbb R_{\ge 0}\rightarrow\mathbb R_{\ge 0}$ defined for all $r\in\mathbb R_{\ge 0}$ as
	
	\begin{center}
		$\bar\alpha(r) \Let \max\left\{\sum_{i=1}^N\alpha_{i}^{-1}(s_i)\,\,\big|\, s_i  {\ge 0},\,\,\sum_{i=1}^N \mu_i s_i=r\right\}. $
	\end{center}
	
	
It is not hard to verify that function $\bar\alpha(\cdot)$ defined above is a $\mathcal{K}_\infty$ function.
By taking the $\mathcal{K}_\infty$ function $\alpha(r):=\bar\alpha^{-1}(r)$, $\forall r\in\R_{\ge0}$, one obtains
$$\alpha(\Vert h(x)-\hat h(\hat x)\Vert)\le V( x, \hat x),$$
satisfying inequality \eqref{eq:lowerbound2}.
Now we prove that SSF $V$ in \eqref{eq:V_comp} satisfies inequality \eqref{eq6666}.
Consider any $x=[{x_1;\ldots;x_N}]\in X$, $\hat x=[{\hat x_1;\ldots;\hat x_N}]\in \hat X$, and
$\hat \nu=[{\hat \nu_{1};\ldots;\hat \nu_{N}}]\in\hat U$. For any $i\in\{1,\ldots,N\}$, there exists $\nu_i\in U_i$, consequently, a vector $\nu=[{\nu_{1};\ldots;\nu_{N}}]\in U$, satisfying~\eqref{Eq_3a} for each pair of subsystems $\Sigma_i$ and $\widehat\Sigma_i$
with the internal inputs given by $[{w_1;\ldots;w_N}]=M[h_{21}(x_1);\ldots;h_{2N}(x_N)]$ and $[{\hat w_1;\ldots;\hat w_N}]=\hat M[\hat h_{21}(\hat x_1);\ldots;\hat h_{2N}(\hat x_N)]$.
Then we have the chain of inequalities in \eqref{Eq_4a}
using conditions \eqref{Con_1a} and \eqref{Con_2a} and by defining $\kappa(\cdot),\rho_{\mathrm{ext}}(\cdot),\psi$ as
\begin{align}\notag
\kappa(r) &\Let \min\left\{\sum_{i=1}^N\mu_i\kappa_i(s_i)\,\,\big|\, s_i  {\ge 0},\,\,\sum_{i=1}^N \mu_i s_i=r\right\} \\\notag
\rho_{\mathrm{ext}}(r) \!&\Let \max\left\{\sum_{i=1}^N\mu_i\rho_{i\mathrm{ext}}(s_i)\,\,\big|\, s_i  {\ge 0},\,\,\|[{s_1;\ldots;s_N}\| = r\right\}]\\\notag
\psi&\Let\sum_{i=1}^N\mu_i\psi_i.
\end{align}
Note that $\kappa$ and $\rho_{\mathrm{ext}}$ in \eqref{Eq_4a} belong to $\mathcal{K}$ and $\mathcal{K}_\infty\cup\{0\}$, respectively, because of their definition provided above. Hence, we conclude that $V$ is an SSF from $\widehat \Sigma$ to $\Sigma$. 
\end{proof}

Figure~\ref{Fig1} illustrates schematically the result of Theorem~\ref{Thm_2a}.

\begin{remark}
	Note that condition \eqref{Con_1a} with $G=I$ is exactly similar to the linear matrix inequality (LMI) appeared in \cite{2016Murat} as composotional stability condition based on dissipativity theory. As discussed in \cite{2016Murat}, the LMI holds independently of the number of subsystems in many physical applications with specific interconnection structures including communication networks, flexible joint robots, and power generators.
\end{remark}

\begin{remark}
	For the compositional construction of finite gMDPs provided in the next section, condition \eqref{Con_2a} is satisfiable by simply selecting  $\hat M = M$.
	Moreover, condition \eqref{Con111} is not restrictive for the results provided in the next section since $\hat W_i$ and $\hat Y_{2i}$ are internal input and output sets of the abstract subsystems $\widehat\Sigma_i$, which are finite. Thus one can readily choose internal input sets $\hat W_i$ such that $\prod_{i=1}^n \hat W_{i} := \hat M\prod_{i=1}^n \hat Y_{2i}$ which implicitly implies a condition on the granularity of discretization for sets $W_i$ and $Y_{2i}$.
\end{remark}

\begin{figure}[ht]
	\begin{center}
		\includegraphics[scale = 0.25]{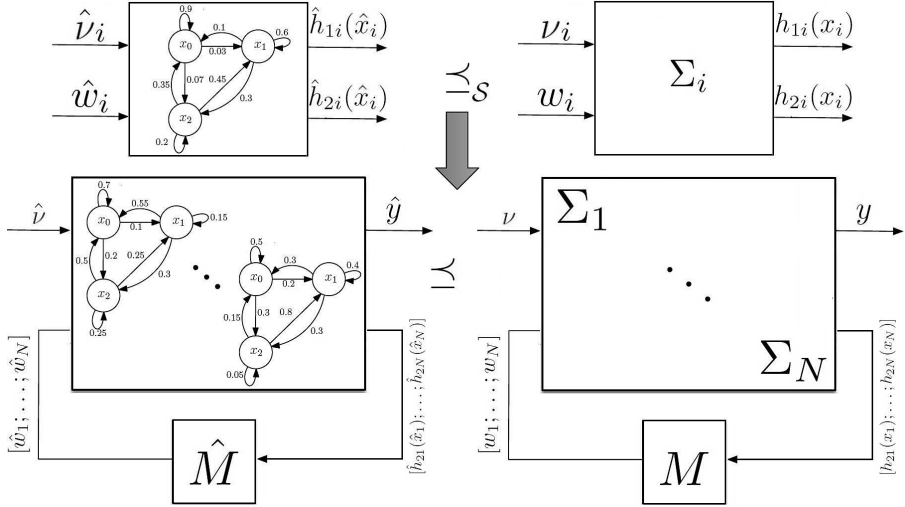}
		\caption{Compositionality results provided that conditions \eqref{Con_1a}, \eqref{Con_2a}, and \eqref{Con111} are satisfied.}
		\label{Fig1}
	\end{center}
\end{figure}

\section{Construction of Finite Markov Decision Processes}
\label{sec:constrcution_finite}

In the previous sections, $\Sigma$ and $\widehat{\Sigma}$ were considered as general discrete-time stochastic control systems without discussing the cardinality of their state spaces. In this section, we consider $\Sigma$ as an infinite dt-SCS and $\widehat{\Sigma}$ as its finite abstraction constructed as in Section~\ref{subsec:MDP}.
We impose conditions on the infinite dt-SCS $\Sigma$ enabling us to find SStF from its finite abstraction $\widehat{\Sigma}$ to $\Sigma$. The required conditions are first presented in a general setting for nonlinear stochastic control systems in Section~\ref{subsec:nonlinear} and then represented via some matrix inequality for linear stochastic control systems in Section~\ref{subsec:linear}.


\subsection{Discrete-Time Nonlinear Stochastic Control Systems}
\label{subsec:nonlinear}

The stochastic storage function from finite MDP $\widehat{\Sigma}$ of Section \ref{subsec:MDP} to $\Sigma$ is established under the assumption that the original discrete-time stochastic control system $\Sigma$ is so-called incrementally passivable as in Assumption~\ref{Def111}.

\begin{assumption}\label{Def111}
	A dt-SCS 
	$\Sigma=(X, U, W,\varsigma, f, Y_1, Y_{2},h_1,h_2)$
	is called \emph{incrementally passivable} if there exist functions $ L: X \to U$ and $ V: X \times X \to \mathbb{R}_{\geq0} $  such that $\forall x,x'\in X$, $\forall \nu\in U$, $\forall w,w' \in W$, the inequalities:
	
	\begin{align}\label{Con555}
	\underline{\alpha} (\Vert h_1(x)-h_1(x')\Vert ) \leq V(x,x'),
	\end{align}	
	and	
	\begin{align}\notag
	\mathbb{E} \Big[V(f(x,L(x)+\nu,&w,\varsigma),f(x',L(x')+\nu,w',\varsigma))\big|x, x',\nu, w, w'\Big]\!-\!V(x,x')\leq\!-\widehat{\kappa}(V(x,x'))\\\label{Con854}
	&+\begin{bmatrix}w-w'\\
	\!h_2(x)-h_2(x')\!
	\end{bmatrix}^T\!\overbrace{\begin{bmatrix}
		\bar X^{11}&\bar X^{12}\\
		\bar X^{21}&\bar X^{22}
		\end{bmatrix}}^{\bar X:=}\begin{bmatrix}
	w-w'\\
	\!h_2(x)-h_2(x')\!
	\end{bmatrix}\!\!,
	\end{align}
	hold for some $\underline{\alpha}\in \mathcal{K}_{\infty}$, $\widehat{\kappa}\in \mathcal{K}$, and matrix $\bar X$ of appropriate dimension.
\end{assumption}

	\begin{remark}
	Note that Assumption \ref{Def111} implies that $V$ is a SStF from system $\Sigma$ equipped with the state feedback controller $L$ to itself. This type of property is closely related to the notion of so-called incremental stabilizability \cite{angeli,pham2009contraction}.
\end{remark}
In Section~\ref{subsec:linear}, we show that inequalities \eqref{Con555}-\eqref{Con854} for a candidate quadratic function $V$ and linear stochastic control systems boil down to some matrix inequality.

%
%


Under Assumption~\ref{Def111}, the next theorem shows a relation between $\Sigma$ and $\widehat{\Sigma}$, constructed as in Algorithm \ref{algo:MC_app}, via establishing a stochastic storage function between them.
\begin{theorem}
	Let $\Sigma$ be an incrementally passivable dt-SCS  via a function $V$ as in Assumption~\ref{Def111} and $\widehat{\Sigma}$ be a finite MDP as in Algorithm \ref{algo:MC_app}. Assume that there exists a function $\gamma\in\mathcal{K}_{\infty}$  such that $V$ satisfies
	\begin{equation}\label{Eq65}
	V(x,x')-V(x,x'')\leq \gamma(\Vert x'-x''\Vert),\quad \forall x,x',x'' \in X.
	\end{equation}
	Then $V$ is a stochastic storage function from $\widehat{\Sigma}$ to $\Sigma$.
\end{theorem}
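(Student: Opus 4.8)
The plan is to verify the two defining inequalities of Definition~\ref{Def_1a} for the incremental-passivability function $V$ of Assumption~\ref{Def111}, restricted to the domain $X\times\hat X$ (legitimate since $\hat X\subseteq X$). The lower bound \eqref{Eq_2a} is essentially immediate: since $\hat h_1=h_1|_{\hat X}$, for every $\hat x\in\hat X$ we have $\hat h_1(\hat x)=h_1(\hat x)$, so \eqref{Con555} evaluated at $x'=\hat x$ gives $\underline\alpha(\Vert h_1(x)-\hat h_1(\hat x)\Vert)\le V(x,\hat x)$; thus \eqref{Eq_2a} holds with $\alpha:=\underline\alpha$.

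The substance is the dissipation inequality \eqref{Eq_3a}. Given $\hat\nu\in\hat U$, I would define the \emph{interface function}
\begin{equation*}
\nu:=L(x)-L(\hat x)+\hat\nu,
\end{equation*}
and write $\nu^\ast:=\hat\nu-L(\hat x)$, so that $\nu=L(x)+\nu^\ast$ and $\hat\nu=L(\hat x)+\nu^\ast$. With this choice the two systems are driven by a \emph{common} external increment $\nu^\ast$ on top of their respective feedbacks, whence $f(x,\nu,w,\varsigma)=f(x,L(x)+\nu^\ast,w,\varsigma)$ and $f(\hat x,\hat\nu,\hat w,\varsigma)=f(\hat x,L(\hat x)+\nu^\ast,\hat w,\varsigma)$. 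This is precisely the trajectory pair compared in \eqref{Con854} under the identifications $x'=\hat x$ and $w'=\hat w$.

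The only discrepancy is that the abstract successor is $\hat f(\hat x,\hat\nu,\hat w,\varsigma)=\Pi(f(\hat x,\hat\nu,\hat w,\varsigma))$ rather than $f(\hat x,\hat\nu,\hat w,\varsigma)$, and this is exactly where hypothesis \eqref{Eq65} is used. Applying \eqref{Eq65} with first argument $f(x,\nu,w,\varsigma)$, second argument $\Pi(f(\hat x,\hat\nu,\hat w,\varsigma))$, and third argument $f(\hat x,\hat\nu,\hat w,\varsigma)$, and bounding $\Vert\Pi(f(\hat x,\hat\nu,\hat w,\varsigma))-f(\hat x,\hat\nu,\hat w,\varsigma)\Vert\le\delta$ via \eqref{eq:Pi_delta} (using monotonicity of $\gamma$), yields the pointwise estimate $V(f(x,\nu,w,\varsigma),\hat f(\hat x,\hat\nu,\hat w,\varsigma))\le V(f(x,\nu,w,\varsigma),f(\hat x,\hat\nu,\hat w,\varsigma))+\gamma(\delta)$. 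Taking conditional expectation, subtracting $V(x,\hat x)$, and then bounding the remaining expectation by \eqref{Con854}, I obtain
\begin{align*}
\mathbb{E}\big[V(f(x,\nu,w,\varsigma),\hat f(\hat x,\hat\nu,\hat w,\varsigma))\,\big|\,x,\hat x,\nu,\hat\nu,w,\hat w\big]&-V(x,\hat x)\le-\widehat\kappa(V(x,\hat x))+\gamma(\delta)\\
&+\begin{bmatrix}w-\hat w\\ h_2(x)-h_2(\hat x)\end{bmatrix}^T\bar X\begin{bmatrix}w-\hat w\\ h_2(x)-h_2(\hat x)\end{bmatrix}.
\end{align*}

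To land exactly in the form \eqref{Eq_3a}, I would then read off the SStF data: take $G=\hat G=H=I$, so that $Gw-\hat G\hat w=w-\hat w$ and, since $\hat h_2=h_2|_{\hat X}$, also $h_2(x)-H\hat h_2(\hat x)=h_2(x)-h_2(\hat x)$; keep the same symmetric matrix $\bar X$ as in Assumption~\ref{Def111}; and set $\kappa:=\widehat\kappa$, $\rho_{\mathrm{ext}}\equiv0$, and $\psi:=\gamma(\delta)$. I expect the main obstacle to be exactly the $\Pi$-projection step: the whole argument hinges on absorbing the gap between the abstract and concrete successors into a single constant offset, which \eqref{Eq65} combined with the discretization bound \eqref{eq:Pi_delta} delivers as $\psi=\gamma(\delta)$. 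Two points deserve care in the write-up: that $\Sigma$ and $\widehat\Sigma$ are driven by the \emph{same} noise realization $\varsigma$, so that the coupling underlying \eqref{Con854} is valid; and that the interface value $\nu=L(x)-L(\hat x)+\hat\nu$ indeed lies in $U$, which holds whenever $U$ is closed under the relevant affine combinations.
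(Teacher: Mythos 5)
Your proof is correct in its mechanics and shares the paper's core decomposition: the lower bound \eqref{Eq_2a} from \eqref{Con555} with $\alpha=\underline{\alpha}$, the insertion of the coupled term $V(f(x,\nu,w,\varsigma),f(\hat x,\cdot,\hat w,\varsigma))$, absorbing the quantization mismatch via \eqref{Eq65} and \eqref{eq:Pi_delta} into the constant $\psi=\gamma(\delta)$, then applying \eqref{Con854} with $x'=\hat x$, $w'=\hat w$, ending with the same data $\kappa=\widehat\kappa$, $\rho_{\mathrm{ext}}\equiv 0$, $G=\hat G=H=I$. Where you genuinely diverge is the interface function and, consequently, which system the quantizer $\Pi$ acts on. You take $\hat f=\Pi\circ f$ literally from Theorem~\ref{Def154} and compensate with the interface $\nu=L(x)-L(\hat x)+\hat\nu$, re-centering both inputs around the common increment $\nu^\ast=\hat\nu-L(\hat x)$. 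The paper instead runs Algorithm~\ref{algo:MC_app} on the pre-stabilized dynamics: in its proof $\hat f(\hat x,\hat\nu,\hat w,\varsigma)=\Pi(f(\hat x,L(\hat x)+\hat\nu,\hat w,\varsigma))$, so the local feedback $L$ is baked into the abstraction and the interface is simply $\nu=L(x)+\hat\nu$. The trade-off is real: the paper's route invokes \eqref{Con854} with $\nu:=\hat\nu\in\hat U\subseteq U$, needing no extra hypothesis on $U$, at the price of $\widehat\Sigma$ being the finite MDP of $\Sigma$ composed with $L$ rather than of $\Sigma$ itself; your route abstracts the raw $\Sigma$, but, as you flag yourself, it requires $\nu^\ast=\hat\nu-L(\hat x)\in U$ (so that Assumption~\ref{Def111}, quantified over $\nu\in U$, applies) and $\nu\in U$ (so the interface is admissible) --- a closure property that holds when $U=\R^m$ but can fail for the compact input sets typical of finite abstractions, such as $[0,0.6]$ in the case study. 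So your argument proves the theorem under that additional closure assumption on $U$; to match the statement without it, either add the hypothesis explicitly or adopt the paper's convention that $\hat f$ quantizes the $L$-stabilized dynamics, in which case your $\nu^\ast$-reparametrization becomes unnecessary and your proof collapses to the paper's.
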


\begin{proof} Since system $\Sigma$ is incrementally passivable, from \eqref{Con555}, $\forall x\in X$ and $ \forall \hat x \in \hat X
	$, we have 
	\begin{align}\notag
	\underline{\alpha} (\Vert h_1(x)-h_1(\hat x ) \Vert )= \underline{\alpha} (\Vert h_1(x)-\hat{h}_1(\hat{x})\Vert ) \leq V(x,\hat{x}),
	\end{align}
	satisfying \eqref{Eq_2a} with  $\alpha(s) \Let \underline{\alpha}(s) $ $\forall s\in \R_{\geq0}$.
	Now by taking the conditional expectation from \eqref{Eq65}, $\forall x\in X, \forall \hat{x} \in \hat X, \forall \hat{\nu} \in \hat U,\forall w \in W,\forall \hat{w} \in \hat W$, we have 
	\begin{align}\notag
	&\mathbb{E}\Big[V(f(x,L(x)\!+\!\hat{\nu},w,\varsigma),\hat f(\hat{x},\hat{\nu},\hat{w},\varsigma))\big|x,\hat x,\hat \nu, w, \hat w\Big]\!-\!\mathbb{E}\Big[V(f(x,L(x)\!+\!\hat{\nu},w,\varsigma),f(\hat{x},L(\hat{x})+\hat{\nu},\hat{w},\varsigma))\big|x,\hat x,\hat \nu, w, \hat w\Big]\\\notag&\leq\mathbb{E}\Big[\gamma(\Vert\hat f(\hat{x},\hat{\nu},\hat{w},\varsigma)-f(\hat{x},L(\hat{x})+\hat{\nu},\hat{w},\varsigma)\Vert)\big|\hat x,\hat x,\hat \nu, w, \hat w\Big],
	\end{align}
	where $\hat f(\hat{x},\hat{\nu},\hat{w},\varsigma) = \Pi(f(\hat{x},L(\hat{x})+\hat{\nu},\hat{w},\varsigma))$. Using Theorem~\ref{Def154} and inequality~\eqref{eq:Pi_delta}, the above inequality reduces to
	\begin{align}\notag
	&\mathbb{E}\Big[V(f(x,L(x)\!+\!\hat{\nu},w,\varsigma),\hat f(\hat{x},\hat{\nu},\hat{w},\varsigma))\big|x,\hat x,\hat \nu, w, \hat w\Big]\!-\!\mathbb{E}\Big[V(f(x,L(x)\!+\!\hat{\nu},w,\varsigma),f(\hat{x},L(\hat{x})+\hat{\nu},\hat{w},\varsigma))\big|x,\hat x,\hat \nu, w, \hat w\Big]\\\notag&\leq\gamma(\delta).
	\end{align}
	
	Employing \eqref{Con854} and since $h_2=\hat h_2$, we get 
	\begin{align}\notag
	\mathbb{E}\Big[V(f(x,L(x)+&\hat{\nu},w,\varsigma),f(\hat{x},L(\hat{x})+\hat{\nu},\hat{w},\varsigma))\big|x,\hat x,\hat \nu,w,\hat w\Big]-V(x,\hat{x})\leq-\widehat{\kappa}(V(x,\hat{x}))\\\notag
	&+\begin{bmatrix}w-\hat w\\\notag
	h_2(x)-\hat h_2(\hat x)
	\end{bmatrix}^T\begin{bmatrix}
	\bar X^{11}&\bar X^{12}\\
	\bar X^{21}&\bar X^{22}
	\end{bmatrix}\begin{bmatrix}
	w-\hat w\\
	h_2(x)-\hat h_2(\hat x)
	\end{bmatrix}\!\!.
	\end{align}
	It follows that $\forall x \in X, \forall \hat x \in \hat X, \forall \hat u \in U,$ and $\forall w \in W,\forall \hat w \in \hat W $, one obtains
	
	\begin{align}\notag
	\mathbb{E}\Big[V(f(x,L(x)+&\hat{\nu},w,\varsigma),\hat f(\hat{x},\hat{\nu},\hat{w},\varsigma)))\big|x,\hat x,\hat \nu, w, \hat w\Big]-V(x,\hat{x})\leq-\widehat{\kappa}(V(x,\hat{x}))+\gamma(\delta)\\\notag
	&+\begin{bmatrix}w-\hat w\\\notag
	h_2(x)-\hat h_2(\hat x)
	\end{bmatrix}^T\begin{bmatrix}
	\bar X^{11}&\bar X^{12}\\
	\bar X^{21}&\bar X^{22}
	\end{bmatrix}\begin{bmatrix}
	w-\hat w\\
	h_2(x)-\hat h_2(\hat x)
	\end{bmatrix}\!\!,
	\end{align}
	satisfying \eqref{Eq_3a}
	with $\psi=\gamma(\delta)$, $\nu=L(x)+\hat{\nu}$, $\kappa=\widehat{\kappa}$, $\rho_{ext}\equiv 0,$ and $G,$ $ \hat{G},$ $H $ are identity matrices of appropriate dimensions. Hence, $V$ is an SStF from $\widehat \Sigma$ to $\Sigma$, which completes the proof.  		
\end{proof}

\begin{remark}
	As shown in \cite{zamani2014symbolic} and by employing the mean value theorem, assumption \eqref{Eq65} is not restrictive provided that $V$ is restricted to a compact subset of $X \times X$.
\end{remark}
Now we provide similar results as in Subsection \ref{subsec:nonlinear} but tailored to linear stochastic control systems.
\subsection{Discrete-Time Linear Stochastic Control Systems}
\label{subsec:linear}
In this subsection, we focus on the class of discrete-time linear stochastic control systems $\Sigma$ and \emph{quadratic} stochastic storage functions $V$. First, we formally define the class of discrete-time linear stochastic control systems. Afterwards, we construct their finite Markov decision processes $\widehat \Sigma$ as in Theorem \ref{Def154}, and then provide conditions under which a candidate V is an SStF from $\widehat \Sigma$ to $\Sigma$. 

The class of discrete-time linear stochastic control systems is given by
\begin{align}\label{Eq_5a}
\Sigma\!:\!\left\{\begin{array}{l}x(k+1)\!=\!Ax(k)\!+\!B\nu(k)\!+\!Dw(k)\!+\!N\varsigma(k),\\
y_1(k)=C_1x(k),\\
y_2(k)=C_2x(k),\end{array}\right.
\end{align}
where the additive noise $\varsigma(k)$ is a sequence of independent random vectors with multivariate standard normal distributions.

We use the tuple
\begin{align}\notag
\Sigma=(A,B,C_1,C_2,D,N),
\end{align}
to refer to the class of discrete-time linear stochastic control systems of the form~\eqref{Eq_5a}.

Consider the following quadratic function
\begin{align}\label{Eq_7a}
V(x,\hat x)=(x-\hat x)^T\tilde M(x-\hat x),
\end{align}
where $\tilde M$ is a positive-definite matrix of appropriate dimension. In order to show that $V$ in \eqref{Eq_7a} is an SStF from $\widehat\Sigma$ to $\Sigma$, we require the following key assumption on $\Sigma$. 
\begin{assumption}\label{As_1a}
	Let $\Sigma=(A,B,C_1,C_2,D,N)$. Assume that for some constant
	$0<\widehat\kappa<1$ and $\pi>0$ there exist matrices $\tilde M\succ0$, $K$, $\bar X^{11}$, $\bar X^{12}$, $\bar X^{21}$, and $\bar X^{22}$ of appropriate dimensions such that matrix inequality \eqref{Eq_8a} holds.
	
	\begin{figure*}
		\begin{align}
		\begin{bmatrix}\label{Eq_8a}
		(1+\pi)(A+BK)^T\tilde M(A+BK) && (A+BK)^T\tilde MD \\
		D^T\tilde M (A+BK)&& (1+\pi)D^T \tilde MD \\
		\end{bmatrix}\preceq\begin{bmatrix}
		\widehat\kappa\tilde M+C_2^T\bar X^{22}C_2 & C_2^T\bar X^{21}\\
		\bar X^{12}C_2 & \bar X^{11}\\
		\end{bmatrix}
		\end{align}
		\rule{\textwidth}{0.1pt}
		\begin{align}
		&\notag\mathbb{E} \Big[V(f(x,\nu,w,\varsigma),\hat{f}(\hat x,\hat \nu,\hat w,\varsigma))\,\big|\,x,\hat{x},\hat{\nu}, w,\hat w\Big]\!-\!V(x,\hat x)=(x-\hat x)^T(A+BK)^T\tilde M(A+BK)(x-\hat x)+2 (x-\hat x)^T\\\notag
		&(A+BK)^T\tilde MD(w-\hat w)\!+\!(w-\hat w)^T D^T\tilde M D(w-\hat w)\!+\!2(x-\hat x)^T(A+BK)^T\tilde M\mathbb{E} \Big[F\,|\,x,\hat x , \hat \nu, w,\hat w \Big]\!+\!2 (w-\hat w)^T \\\notag
		&D^T\tilde M\mathbb{E} \Big[F\,|\,x,\hat x, \hat \nu, w,\hat w\Big]\!+\!\mathbb{E} \Big[F^T \tilde M F \,|\,x,\hat x, \hat \nu, w,\hat w\Big]-V(x,\hat x)\\\notag
		&\le
		\begin{bmatrix}x-\hat x\\w-\hat w\\\end{bmatrix}^T\begin{bmatrix}
		(1+\pi)(A+BK)^T\tilde M(A+BK) && (A+BK)^T\tilde MD\\
		D^T\tilde M (A+BK)&& (1+\pi)D^T \tilde MD\\
		\end{bmatrix}\begin{bmatrix}x-\hat x\\w-\hat w\\\end{bmatrix}+ (1+2/\pi)\lambda_{\max}{(\tilde M)}\delta^2-V(x,\hat x)\\\notag
		&\le\begin{bmatrix}x-\hat x\\w-\hat w\\\end{bmatrix}^T\begin{bmatrix}
		\widehat\kappa\tilde M+C_2^T\bar X^{22}C_2 & C_2^T\bar X^{21}\\
		\bar X^{12}C_2 & \bar X^{11}\\
		\end{bmatrix}\begin{bmatrix}x-\hat x\\w-\hat w\\\end{bmatrix}+ (1+2/\pi)\lambda_{\max}{(\tilde M)}\delta^2-V(x,\hat x)\\\label{Eq_55a}
		&=
		-(1-\widehat\kappa) (V(x,\hat x))+\begin{bmatrix}w-\hat w\\C_2x- \hat C_2\hat x\end{bmatrix}^T\begin{bmatrix}
		\bar X^{11}&\bar X^{12}\\
		\bar X^{21}&\bar X^{22}
		\end{bmatrix}\begin{bmatrix}w-\hat w\\C_2x- \hat C_2\hat x\end{bmatrix}+ (1+2/\pi)\lambda_{\max}{(\tilde M)}\delta^2.
		\end{align}
		\rule{\textwidth}{0.1pt}
		\vspace{-5mm}
	\end{figure*}
\end{assumption}
Now, we provide another main result of this section showing that under some conditions $V$ in \eqref{Eq_7a} is an SStF from $\widehat \Sigma$ to $\Sigma$.
\begin{theorem}\label{Thm_3a}
	Let $\Sigma=(A,B,C_1,C_2,D,N)$ and $\widehat \Sigma$ be a  finite Markov decision process with discretization parameter $\delta$, and $\hat Y_1\subseteq Y_1$. Suppose Assumption \ref{As_1a} holds, $C_1 = \hat C_1$, and  $C_2 = \hat C_2$, then function $V$ defined in~\eqref{Eq_7a} is an SStF from $\widehat \Sigma$ to $\Sigma$.
\end{theorem}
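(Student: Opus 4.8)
The plan is to verify directly the two defining conditions of an SStF (Definition~\ref{Def_1a}) for the quadratic candidate $V$ in \eqref{Eq_7a}, exhibiting the interface function $\nu = K(x-\hat x)+\hat\nu$ (so that $\nu-\hat\nu = K(x-\hat x)$) together with the data $G=\hat G=H=I$, $\rho_{\mathrm{ext}}\equiv 0$, $\kappa(r)=(1-\widehat\kappa)r$, and $\psi=(1+2/\pi)\lambda_{\max}(\tilde M)\delta^2$. Since $0<\widehat\kappa<1$, the map $\kappa$ is a $\mathcal{K}$ function, so these choices are admissible.

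First I would establish the lower bound \eqref{Eq_2a}. Because $C_1=\hat C_1$, one has $h_1(x)-\hat h_1(\hat x)=C_1(x-\hat x)$, and positive-definiteness of $\tilde M$ gives $V(x,\hat x)\ge \lambda_{\min}(\tilde M)\Vert x-\hat x\Vert^2 \ge (\lambda_{\min}(\tilde M)/\Vert C_1\Vert^2)\Vert C_1(x-\hat x)\Vert^2$. Hence $\alpha(s):=(\lambda_{\min}(\tilde M)/\Vert C_1\Vert^2)s^2$ is a $\mathcal{K}_\infty$ function satisfying \eqref{Eq_2a}, the degenerate case $C_1=0$ being trivial.

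The core of the argument is the dissipation inequality \eqref{Eq_3a}, and this is where the stochastic bookkeeping must be handled with care. The key observation is that, since both systems are driven by the \emph{same} noise $\varsigma$, the term $N\varsigma$ cancels in $f(x,\nu,w,\varsigma)-f(\hat x,\hat\nu,\hat w,\varsigma)$, leaving the deterministic quantity $\xi:=(A+BK)(x-\hat x)+D(w-\hat w)$; by Theorem~\ref{Def154} the remaining randomness resides entirely in the projection residual $F:=f(\hat x,\hat\nu,\hat w,\varsigma)-\hat f(\hat x,\hat\nu,\hat w,\varsigma)$, which by \eqref{eq:Pi_delta} satisfies $\Vert F\Vert\le\delta$ pointwise. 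Writing $f(x,\nu,w,\varsigma)-\hat f(\hat x,\hat\nu,\hat w,\varsigma)=\xi+F$ and expanding $V$ yields $\mathbb{E}[V]=\xi^T\tilde M\xi+2\xi^T\tilde M\,\mathbb{E}[F]+\mathbb{E}[F^T\tilde M F]$. The main technical step is to absorb the cross term $2\xi^T\tilde M\,\mathbb{E}[F]$: applying Young's inequality with parameter $\pi$ to the two pieces of $\xi$ converts it into $\pi$ times the diagonal blocks $(A+BK)^T\tilde M(A+BK)$ and $D^T\tilde M D$ plus residuals which, via Jensen's inequality and $\Vert F\Vert\le\delta$, produce precisely the matrix appearing on the left of \eqref{Eq_8a} acting on $[\,x-\hat x;\,w-\hat w\,]$ together with the error term $(1+2/\pi)\lambda_{\max}(\tilde M)\delta^2$. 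This is the step I expect to be the main obstacle, since the cross term must be split so as not to disturb the block structure needed to invoke the assumption, and one must keep track of the expectation versus the pointwise bound on the quantization error.

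Finally I would invoke Assumption~\ref{As_1a}: the matrix inequality \eqref{Eq_8a} bounds the resulting quadratic form by $\widehat\kappa(x-\hat x)^T\tilde M(x-\hat x)$ plus the term built from the blocks $\bar X^{ij}$ and $C_2$. Using $C_2=\hat C_2$ to write $C_2(x-\hat x)=h_2(x)-\hat h_2(\hat x)$ identifies that term as the quadratic supply rate in $[\,w-\hat w;\,h_2(x)-\hat h_2(\hat x)\,]$, and recognizing $(x-\hat x)^T\tilde M(x-\hat x)=V(x,\hat x)$ gives the right-hand side $-(1-\widehat\kappa)V(x,\hat x)+\text{(supply rate)}+(1+2/\pi)\lambda_{\max}(\tilde M)\delta^2$, which is exactly \eqref{Eq_55a}. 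This is inequality \eqref{Eq_3a} with the data listed above, so $V$ is an SStF from $\widehat\Sigma$ to $\Sigma$, completing the proof.
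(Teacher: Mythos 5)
Your proposal is correct and takes essentially the same route as the paper's proof: the same interface function $\nu=K(x-\hat x)+\hat\nu$, the same decomposition of the state difference into $(A+BK)(x-\hat x)+D(w-\hat w)$ plus the quantization residual $F$ with $\Vert F\Vert\le\delta$ from \eqref{eq:Pi_delta}, Young's inequality with parameter $\pi$ absorbing the cross terms into the $(1+\pi)$-weighted diagonal blocks so that \eqref{Eq_8a} applies, and the identical data $\alpha(s)=\frac{\lambda_{\min}(\tilde M)}{\lambda_{\max}(C_1^TC_1)}s^2$ (your $\Vert C_1\Vert^2$ equals $\lambda_{\max}(C_1^TC_1)$), $\kappa(r)=(1-\widehat\kappa)r$, $\rho_{\mathrm{ext}}\equiv 0$, $\psi=(1+2/\pi)\lambda_{\max}(\tilde M)\delta^2$, and $G=\hat G=H=I$. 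No substantive differences from the paper's argument.
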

\begin{proof}
	Here, we show that $\forall x$, $\forall \hat x$, $\forall \hat \nu$, $\exists\nu$, $\forall \hat w$, $\forall w$, $V$ satisfies $\frac{\lambda_{\min}(\tilde M)}{\lambda_{\max}(C_1^TC_1)}\Vert C_1x-\hat C_1\hat x\Vert^2\le V(x,\hat x)$ and
	\begin{align}\notag
	\mathbb{E}\Big[V(f(x,\nu,w,\varsigma),\hat{f}(\hat x&,\hat \nu,\hat w,\varsigma))\,\big|\,x,\hat{x},\hat{\nu}, w,\hat w\Big]-V(x,\hat x)\leq-(1-\widehat\kappa) (V(x,\hat x))+(1+2/\pi)\lambda_{\max}{(\tilde M)}\delta^2\\\notag
	&+\begin{bmatrix}
	w-\hat w\\
	h_2(x)-\hat h_2(\hat x)
	\end{bmatrix}^T{\begin{bmatrix}
		\bar X^{11}&\bar X^{12}\\
		\bar X^{21}&\bar X^{22}
		\end{bmatrix}}\begin{bmatrix}
	w-\hat w\\
	h_2(x)-\hat h_2(\hat x)
	\end{bmatrix}\!.
	\end{align}
	
	Since $C_1 = \hat C_1$, we have $\Vert C_1x-\hat C_1\hat x\Vert^2=(x-\hat x)^TC_1^TC_1(x-\hat x)$. Since  $\lambda_{\min}(C_1^TC_1)\Vert x- \hat x\Vert^2\leq(x-\hat x)^TC_1^TC_1(x-\hat x)\leq\lambda_{\max}(C_1^TC_1)\Vert x- \hat x\Vert^2$ and similarly  $\lambda_{\min}(\tilde M)\Vert x- \hat x\Vert^2\leq(x-\hat x)^T\tilde M(x-\hat x)\leq\lambda_{\max}(\tilde M)\Vert x- \hat x\Vert^2$, it can be readily verified that  $\frac{\lambda_{\min}(\tilde M)}{\lambda_{\max}(C_1^TC_1)}\Vert C_1x-\hat C_1\hat x\Vert^2\le V(x,\hat x)$ holds $\forall x$, $\forall \hat x$, implying that inequality \eqref{Eq_2a} holds with $\alpha(s)=\frac{\lambda_{\min}(\tilde M)}{\lambda_{\max}(C_1^TC_1)}s^2$ for any $s\in\R_{\geq0}$. We proceed with showing that the inequality~\eqref{Eq_3a} holds, as well. Given any $x$, $\hat x$, and $\hat \nu$, we choose $\nu$ via the following \emph{interface} function:
	\begin{align}\label{Eq_255}
	\nu=\nu_{\hat \nu}(x,\hat x,\hat \nu):=K(x-\hat x)+\hat \nu.
	\end{align}
	By employing the definition of the interface function, we simplify
	\begin{align}\notag
	Ax \!+\! B\nu_{\hat \nu}(x,\hat x, \hat \nu) \!+\! Dw \!+\! N\varsigma \!-\!\Pi(A\hat x \!+\! B\hat \nu \!+\! D\hat w \!+\! N\varsigma)
	\end{align}
	to 
	\begin{align}\notag
	(A+BK)(x-\hat x)\!+\! D(w-\hat w) \!+ F,
	\end{align}
	where $F = \! A\hat x \!+B\hat \nu\!+\! D\hat w \!+ \! N\varsigma-\Pi(A\hat x+ B\hat \nu + D\hat w + N\varsigma)$.
	Using Young's inequality \cite{young1912classes} as $ab\leq \frac{\pi}{2}a^2+\frac{1}{2\pi}b^2,$ for any $a,b\geq0$ and any $\pi>0$, and by employing Cauchy-Schwarz inequality, $C_2 = \hat C_2$, and since 
	\begin{align}\notag
	\left\{\begin{array}{l}\Vert F\Vert\leq \delta,\\
	F^T \tilde M F \leq\lambda_{\max}(\tilde M)\delta^2,\end{array}\right.
	\end{align}
	one can obtain the chain of inequalities in \eqref{Eq_55a}. Hence, the proposed $V$ in \eqref{Eq_7a} is an SStF from  $\widehat \Sigma$ to $\Sigma$, which completes the proof. Note that functions $\alpha\in\mathcal{K}_\infty$, $\kappa\in\mathcal{K}$, $\rho_{\mathrm{ext}}\in\mathcal{K}_\infty\cup\{0\}$, and matrix $\bar X$ in Definition \ref{Def_1a} associated with $V$ in (\ref{Eq_7a}) are defined as $\alpha(s)=\frac{\lambda_{\min}(\tilde M)}{\lambda_{\max}(C_1^TC_1)}s^2$, $\kappa(s):=(1-\widehat\kappa) s$, $\rho_{\mathrm{ext}}(s):=0$, $\forall s\in\R_{\ge0}$, and $\bar X=\begin{bmatrix}
	\bar X^{11}&\bar X^{12}\\
	\bar X^{21}&\bar X^{22}
	\end{bmatrix}$. Moreover, positive constant $\psi$ in \eqref{Eq_3a} is $\psi=(1+2/\pi)\lambda_{\max}{(\tilde M)}\delta^2$.
\end{proof}

\section{Case Study}

In this section, we apply our results to the temperature regulation of $n\geq 3$ rooms each equipped with a heater and connected on a circle.
The model of this case study is adapted from \cite{meyer} by including stochasticity in the model as additive noise.
The evolution of temperature $T$ sampled at time interval of length $\tau = 9$ minutes
can be described by the interconnected discrete-time stochastic control system
\begin{align}\notag
\Sigma:\left\{\begin{array}{l}{T}(k+1)=A{T}(k)+\gamma T_{h}\nu(k)+ \beta T_{E}+\varsigma(k),\\
y(k)={T}(k),\end{array}\right.
\end{align}
where $A$ is a matrix with diagonal elements $a_{ii}=(1-2\eta-\beta-\gamma\nu_{i}(k))$, $i\in\{1,\ldots,n\}$, off-diagonal elements $a_{i,i+1}=a_{i+1,i}=a_{1,n}=a_{n,1}=\eta$, $i\in \{1,\ldots,n-1\}$, and all other elements are identically zero.
Parameters $\eta$, $\beta$, and $\gamma$ are conduction factors respectively between the rooms $i \pm 1$ and the room $i$,
between the external environment and the room $i$,
and between the heater and the room $i$.
Moreover,  $ T(k)=[T_1(k);\ldots;T_n(k)]$,  $\nu(k)=[\nu_1(k);\ldots;\nu_n(k)]$, $ \varsigma(k)=[\varsigma_1(k);\ldots;\varsigma_n(k)]$, $T_E=[T_{e1};\ldots;T_{en}]$, where $T_i(k)$ and $\nu_i(k)$ are taking values in $[19,21]$ and $[0,0.6]$, respectively, for all $i\in\{1,\ldots,n\}$. The parameter $T_{ei}=-1\,^\circ C$ are the outside temperature $\forall i\in\{1,\ldots,n\}$, and $T_h=50\,^\circ C$ is the heater temperature. Now, by introducing $\Sigma_i$ described as
\begin{align}\notag
\Sigma_i:\left\{\hspace{-2.5mm}\begin{array}{l}T_i(k+1)=(1-2\eta-\beta-\gamma\nu_{i}(k))T_i(k)+\gamma T_h \nu_i(k)+\eta w_i(k)+\beta T_{ei}+\varsigma_i(k),\\
y_{1i}(k)=T_i(k),\\
y_{2i}(k)=T_i(k),\end{array}\right.
\end{align}
one can readily verify that $\Sigma=\mathcal{I}(\Sigma_1,\ldots,\Sigma_N)$ where the coupling matrix $M$ is such that  $m_{i,i+1}=m_{i+1,i}=m_{1,n}=m_{n,1}=1$, $i\in \{1,\ldots,n-1\}$, and all other elements are identically zero. One can also verify that, $\forall i\in\{1,\ldots,n\}$, condition \eqref{Eq_8a} is satisfied with $\tilde M_i=1$, $K_i=0$, $\bar X^{11}=\eta^2(1+\pi_i)$, $\bar X^{22}=-3.38 \eta(1+\pi_i)$, $\bar X^{12}=\bar X^{21}=\eta\lambda $, where $ \lambda = 1-2\eta-\beta-\gamma\nu_{i}(k)$, and selecting some appropriate values for $\widehat\kappa_i$, $\pi_i$. 
Hence, function $V_i(T_i,\hat T_i)=(T_i-\hat T_i)^2$ is an SStF from $\widehat\Sigma_i$ to $\Sigma_i$ satisfying condition \eqref{Eq_2a} with $\alpha_{i}(s)=s^2$ and condition \eqref{Eq_3a} with $\kappa_i(s):=(1-\widehat\kappa_i) s$, $\rho_{\mathrm{iext}}(s)=0$, $\forall s\in\R_{\ge0}$, $\psi_i=(1+2/\pi_i)\delta_i^2$, $G_i=\hat G_i=H_i=I$, and 
\begin{equation}\label{Eq_22}
\bar X_i=\begin{bmatrix} \eta^2(1+\pi_i) & \eta\lambda  \\ \eta\lambda  &  -3.38\eta(1+\pi_i) \end{bmatrix}\!\!,
\end{equation}
where the input $\nu_i$ is given via the interface function in \eqref{Eq_255} as $\nu_i=\hat \nu_i$. 
Now, we look at $\widehat\Sigma=\mathcal{I}(\widehat\Sigma_1,\ldots,\widehat\Sigma_N)$ with a coupling matrix $\hat M$ satisfying condition \eqref{Con_2a} as $\hat M = M$.
Choosing $\mu_1=\cdots=\mu_N=1$ and using $\bar  X_i$ in (\ref{Eq_22}), matrix $\bar X_{cmp}$ in \eqref{Def_3a} reduces to
$$
\bar X_{cmp}=\begin{bmatrix} \eta^2 (1+\pi)I_{n} & \eta\lambda I_{n} \\ \eta\lambda I_{n} &  -3.38\eta (1+\pi)I_{n} \end{bmatrix}\!\!,
$$
and condition \eqref{Con_1a} reduces to
\begin{align}\notag
\begin{bmatrix} M \\ I_n \end{bmatrix}^T\!\!\!\!\!\bar X_{cmp}\!\!\begin{bmatrix} M \\ I_n \end{bmatrix}\!=\!\eta^2(1\!+\!\pi)M^TM+\eta\lambda M+\eta\lambda M^T\!- 3.38\eta(1+\pi)I_{n} \preceq 0,
\end{align}
without requiring any restrictions on the  number or gains of the subsystems.
In order to satisfy the above inequality, we used $M=M^T$\!, and $4\eta^2(1+\pi)+4\eta\lambda-3.38\eta(1+\pi)\preceq 0$ employing Gershgorin circle theorem \cite{bell1965gershgorin} which can be satisfied for the appropriate values of $\eta, \pi$ and $\lambda$. By choosing finite internal input sets $\hat W_{i}$ of $\widehat\Sigma$ such that $\prod_{i=1}^n \hat W_{i} = \hat M\prod_{i=1}^n \hat X_{i},$ condition \eqref{Con111} is also satisfied. 
Now, one can verify that $V(T,\hat T)=\sum_{i=1}^n(T_i-\hat T_i)^2$ is an SSF from  $\widehat\Sigma$ to $\Sigma$ satisfying conditions \eqref{eq:lowerbound2} and \eqref{eq6666}  with $\alpha(s)=s^2$, $\kappa(s):=(1-\widehat\kappa) s$, $\rho_{\mathrm{ext}}(s)=0$, $\forall s\in\R_{\ge0}$, and $\psi=n(1+2/\pi)\delta^2$.

To demonstrate the effectiveness of proposed approach, we fix $n=15$. By taking the state set discretization parameter $\delta = 0.005$, $\widehat \kappa_i = 0.99, \pi_i = 0.04, \eta_i = 0.1, \beta_i=0.022, \gamma_i = 0.05$, $\forall i\in\{1,\ldots,n\}$, one can readily verify that conditions \eqref{Con_1a} and \eqref{Eq_8a} are satisfied. Accordingly, by using the stochastic simulation function $V$ as in inequality (\ref{Eq_25}) and starting the initial states of the interconnected systems $\Sigma$ and $ \widehat \Sigma$ from 20, we guarantee that the distance between outputs of $\Sigma$ and of $\widehat \Sigma$ will not exceed $\varepsilon = 0.63$ during the time horizon $T_d=10$ with probability at least $90\%$, i.e.
\begin{equation*}
\mathbb P(\Vert y_{a\nu}(k)-\hat y_{\hat a \hat\nu}(k)\Vert\le 0.63,\,\, \forall k\in[0,10])\ge 0.9\,\,.
\end{equation*}
Note that for the construction of finite gMDP, we have selected the center of partition sets as representative points. This choice has further tightened the above inequality.

Let us now synthesize a controller for $\Sigma$ via the abstraction $\widehat \Sigma$ such that the controller maintains the temperature of any room in the safe set [19,21]. The idea here is to first design a local control for abstraction $\widehat \Sigma_i$, and then refine it to system $\Sigma_i$ using interface function. Consequently, controller for the interconnected system  $\Sigma$ would be a vector such that each of its components is the controller for the interconnected system $\Sigma_i$. We employ here software tool \texttt{FAUST}  \cite{FAUST15} to synthesize a controller for $\Sigma$ by taking the input set discretization parameter $\theta = 0.04$, and standard deviation of the noise $\sigma_i = 0.28$, $\forall i\in\{1,\ldots,n\}$. A closed-loop state trajectories of the representative room is illustrated in Figure~\ref{Fig2} left. The optimal policy $\nu$ and the associated safety probability for a representative room in the network are plotted in Figure~\ref{Fig3} as a function of initial temperature of the room. The synthesized optimal policy $\nu$ is smoothly decreasing from the maximum input $0.6$ to the minimum $0$ as temperature increases. The maximum safety probability is around the center of the interval $[19,21]$, and its minimums are at the two boundaries. Note that the oscillations appeared in Figure~\ref{Fig3} are due to the state and input discretization.

We now compare the guarantees provided by our approach and by \cite{SAM15}. Note that our result is based on finite gMDP while \cite{SAM15} uses Dynamic Bayesian Network (DBN) to capture the dependencies between subsystems. The comparison is shown in Figure~\ref{Fig6} in logarithmic scale.
In Figure~\ref{Fig6} left, we have fixed $\varepsilon=0.2$ (cf. \eqref{Eq_25}) and plotted the error as a function of discretization parameter $\delta$ and standard deviation of the noise $\sigma$. Our error of \eqref{Eq_25} is independent of $\sigma$ while the error of \cite{SAM15} converges to infinity when $\sigma$ goes to zero. Thus our new approach outperforms \cite{SAM15} for smaller standard deviation of noise.
In Figure~\ref{Fig6} right, we have fixed $\sigma = 0.28$ and plotted the error as a function of discretization parameter $\delta$ and $\varepsilon$.
The error in \cite{SAM15} is independent of $\varepsilon$ while our error increases when $\varepsilon$ goes to zero. Thus there is a tradeoff between $\varepsilon$ and $\delta$ to get better bounds in comparison with \cite{SAM15}.

\begin{figure}
	\centering
	\includegraphics[width=8cm]{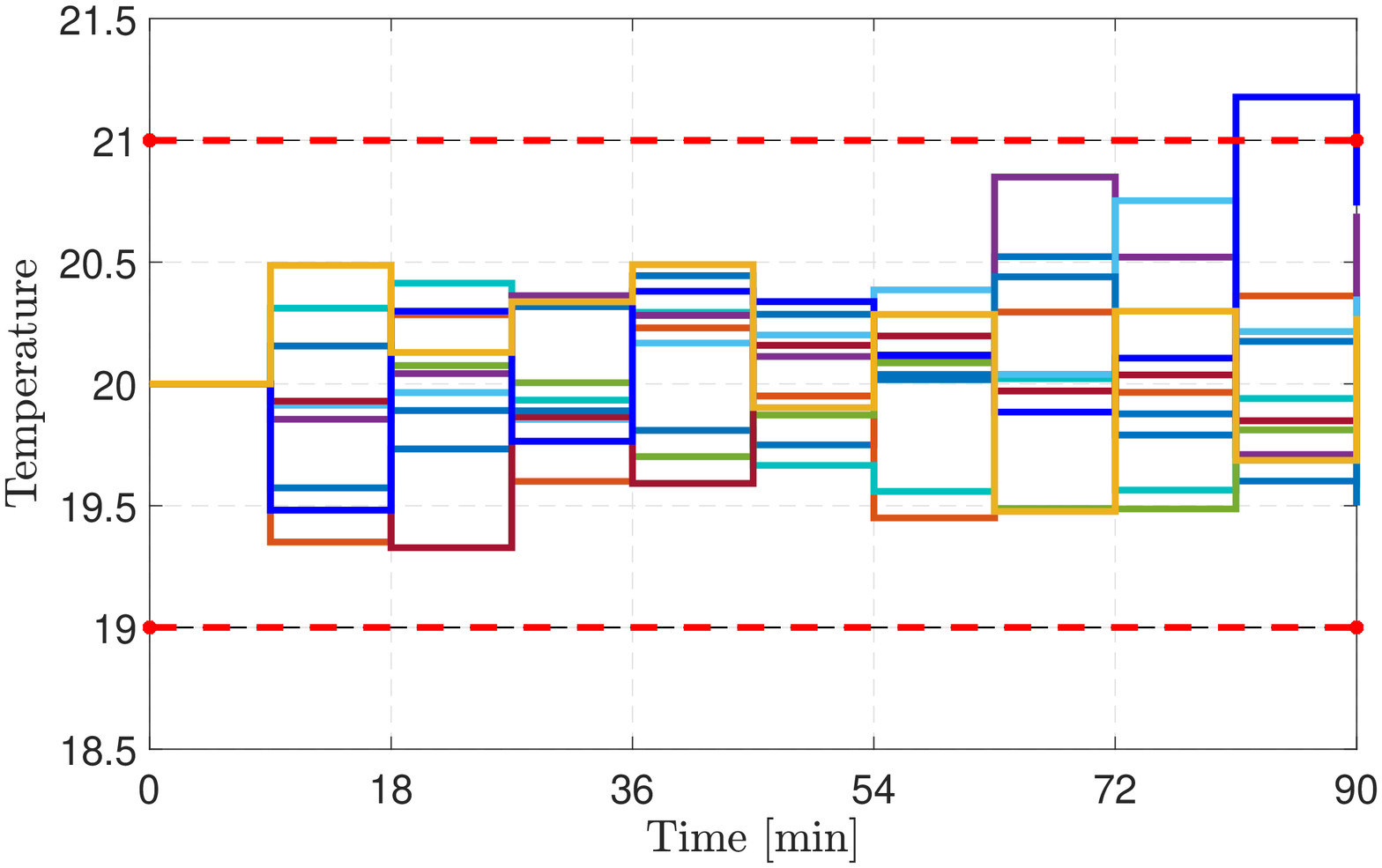}
		\includegraphics[width=8.1cm]{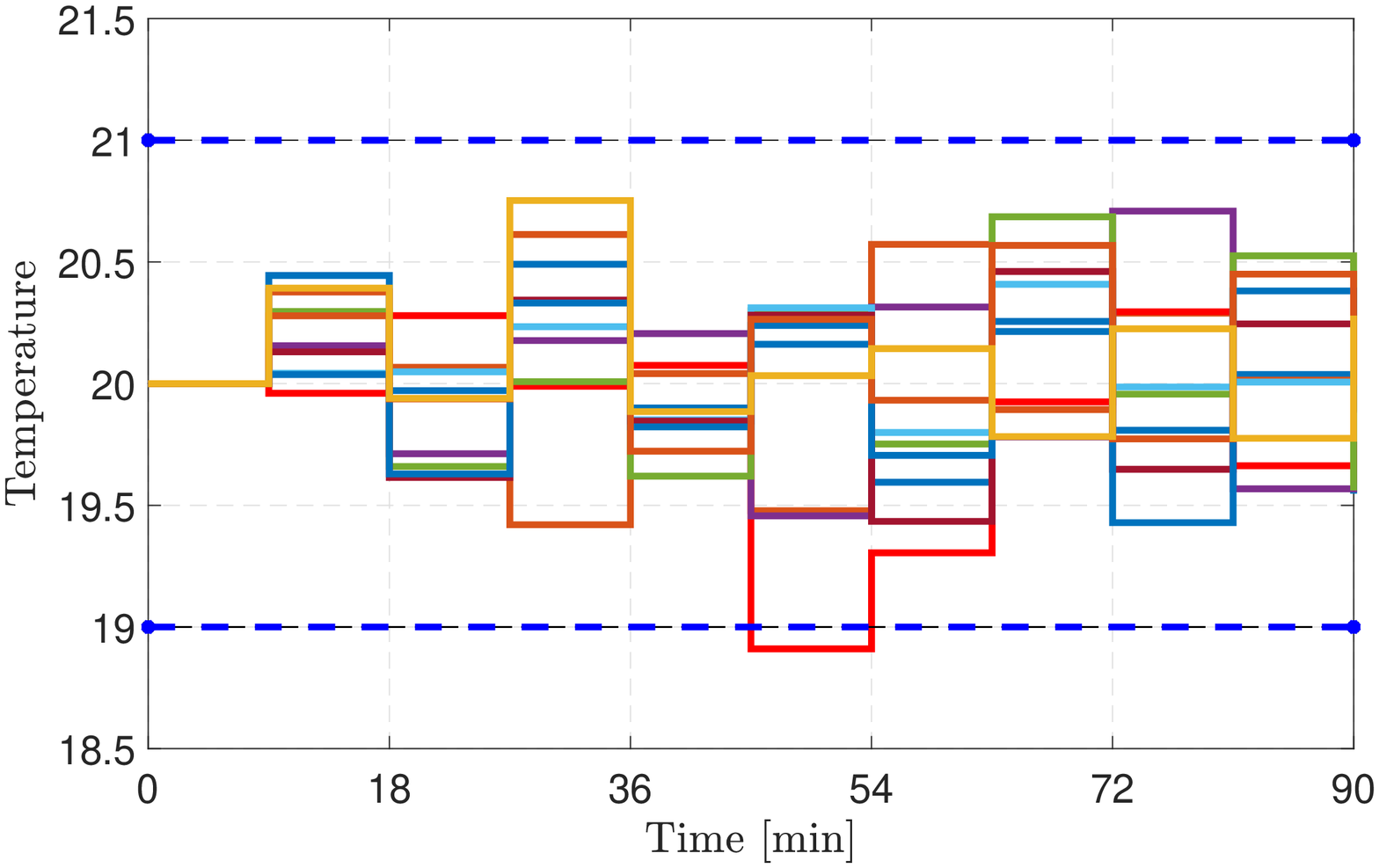}
	\caption{Closed loop state trajectories of a representative room, left plot in a network of 15 rooms and right plot in a network of 200 rooms.}
	\label{Fig2}
\end{figure}
\begin{figure}
	\centering
	\includegraphics[width=8cm]{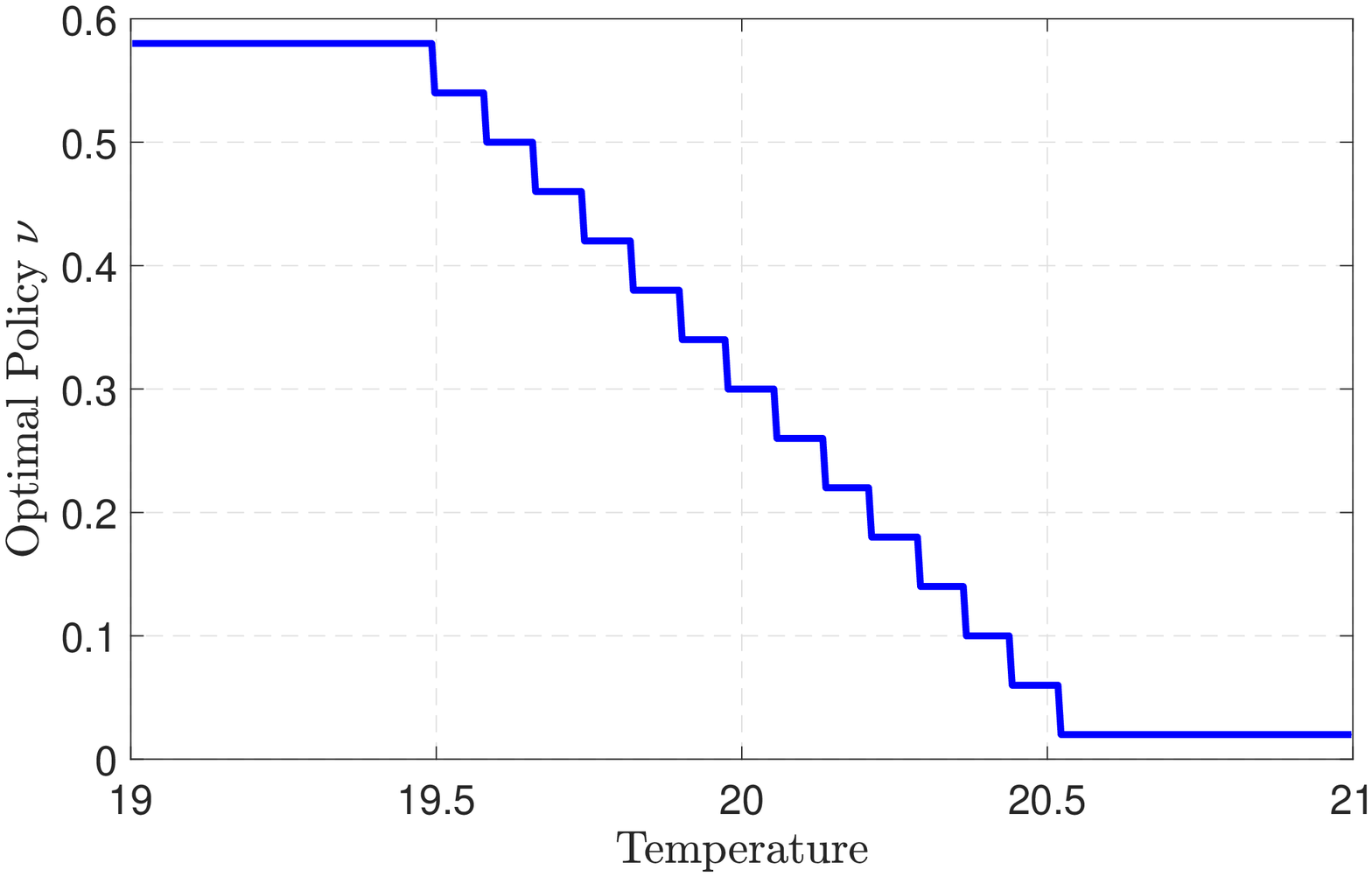}
	\includegraphics[width=8cm]{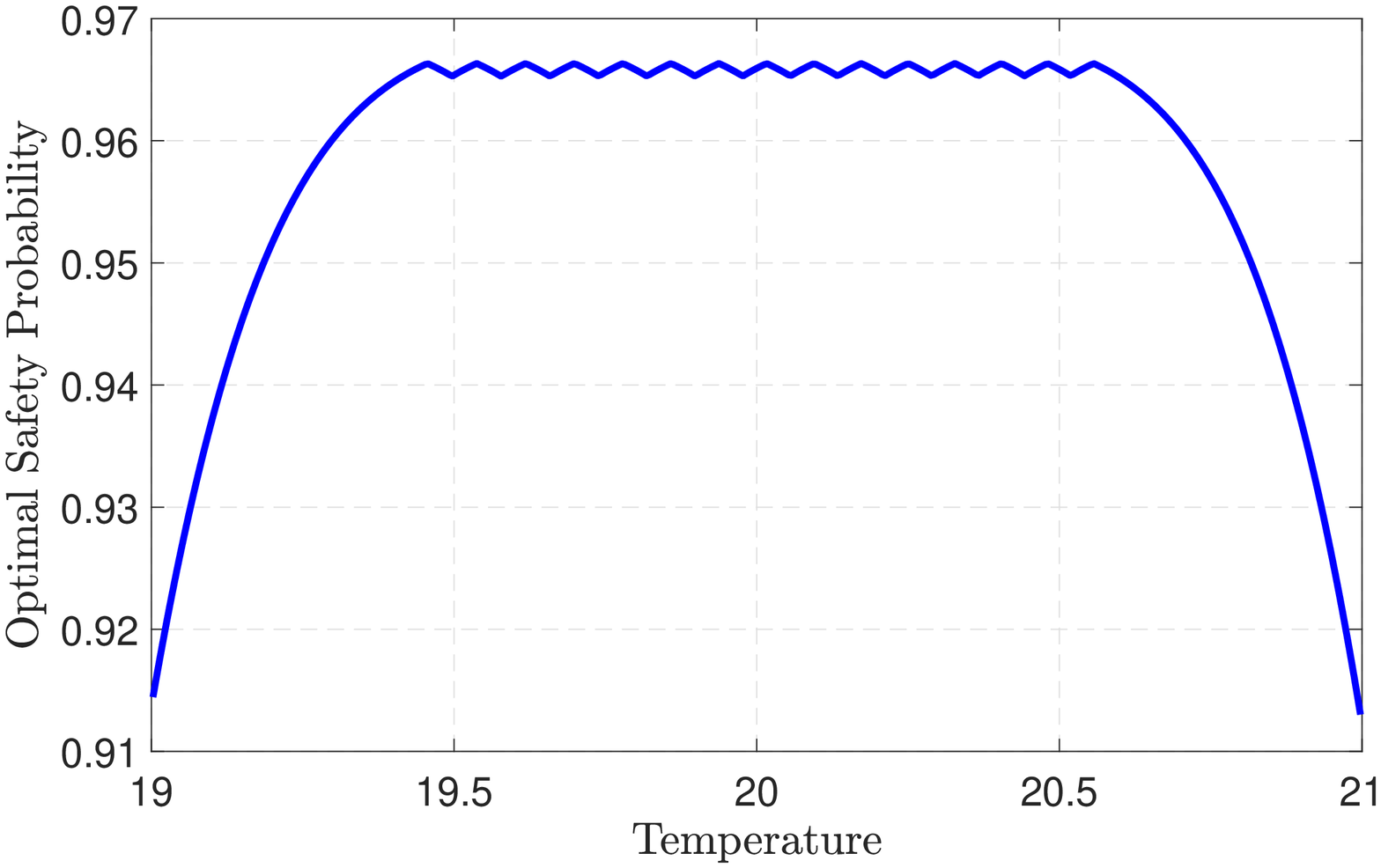}
	\caption{Optimal policy $\nu$ (left) and optimal safety probability (right) with time horizon $T_d = 10$ for a representative room in a network of 15 rooms.}
	\label{Fig3}
\end{figure}

\begin{figure}
	\centering
	\includegraphics[width=8cm]{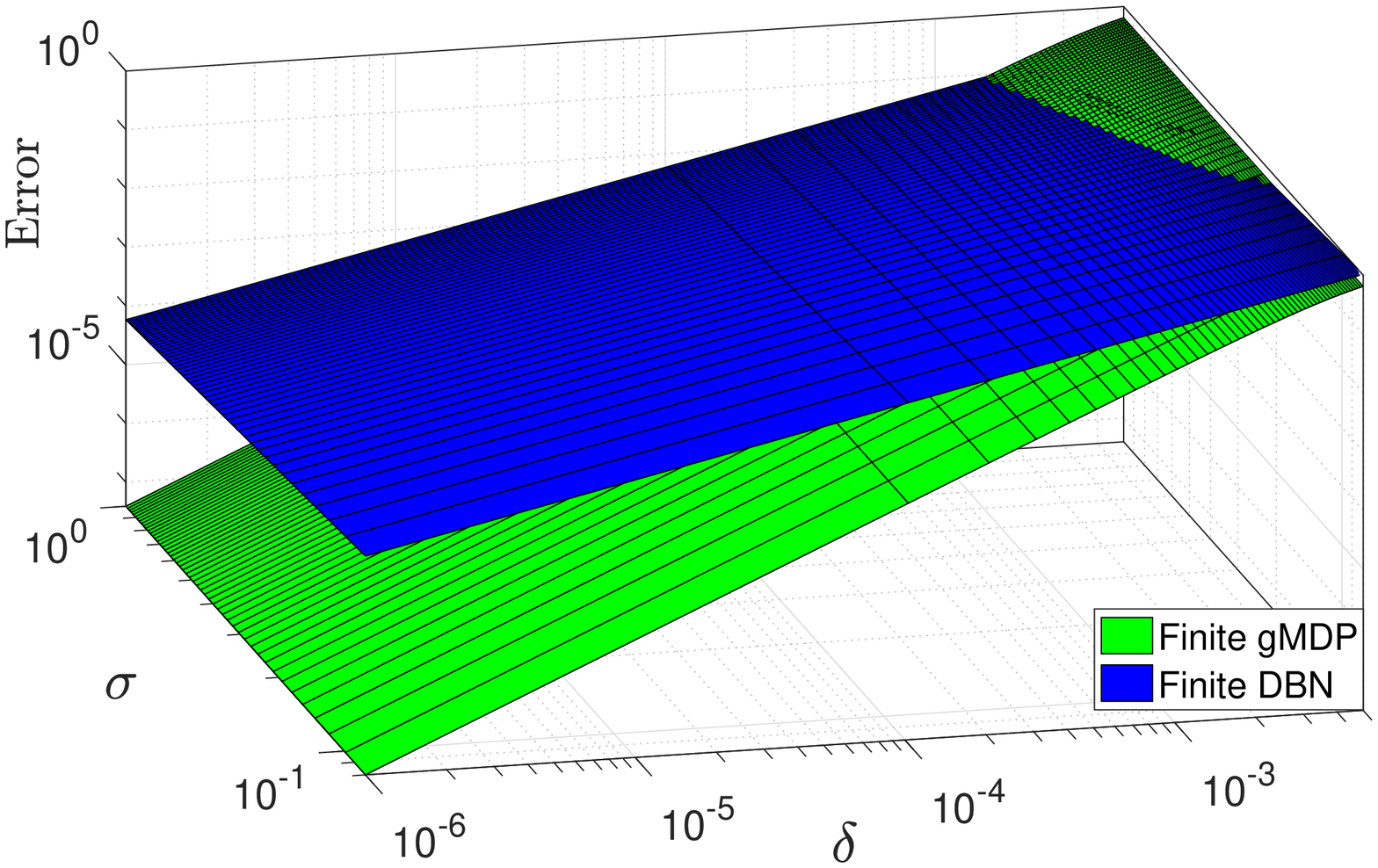}
    \includegraphics[width=8cm]{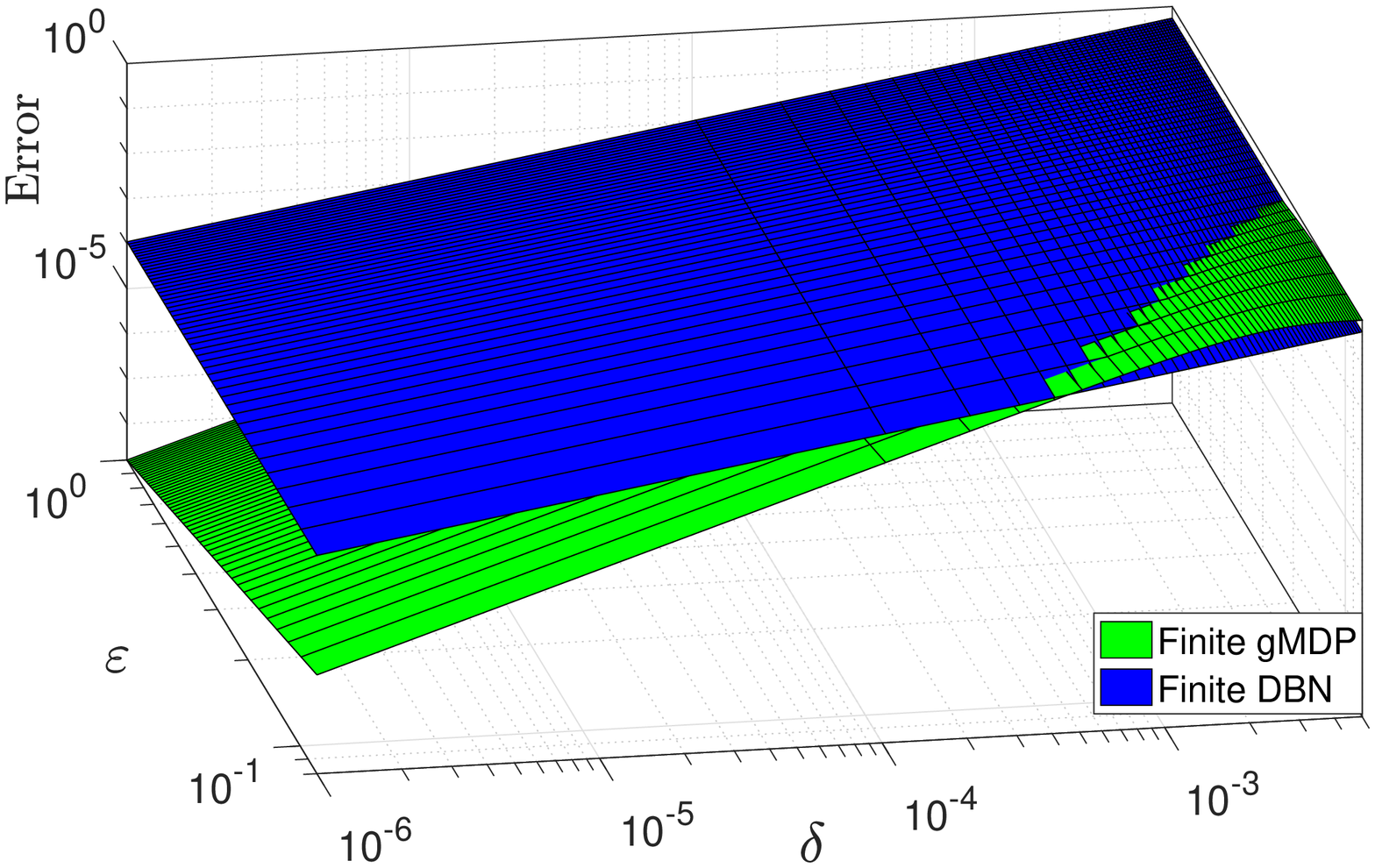}
	\caption{Comparison of error bound provided by the approach of this paper based on finite gMDP with that of [EAM15] based on finite DBN. Plots are in logarithmic scale for a fixed
		$\varepsilon = 0.2$ (cf. (3.5)) in the left and  for a fixed
		noise standard deviation $\sigma = 0.28$ in the right.}
	\label{Fig6}
\end{figure}

In order to show scalability of our approach, we increase the number of rooms to $n=200$. If we take the state set discretization parameter $\delta = 0.005$, $\widehat \kappa_i = 0.99, \pi_i = 0.98, \eta_i = 0.1, \beta_i=0.4, \gamma_i = 0.5$, $\forall i\in\{1,\ldots,n\}$, conditions \eqref{Con_1a} and \eqref{Eq_8a} are readily met. Moreover, if the initial states of the interconnected systems $\Sigma$ and $ \widehat \Sigma$ are started from 20, one can readily verify that the norm of error between outputs of $\Sigma$ and of $ \widehat \Sigma$ will not exceed $0.63$ with probability at least $90\%$ computed by the stochastic simulation function $V$ as in inequality (\ref{Eq_25}) for $T_d=10$. Similarly, we  synthesize a controller for $\Sigma$ via the abstraction $\widehat \Sigma$ by taking the input set discretization parameter $\theta = 0.04$, and $\sigma_i = 0.21$, $\forall i\in\{1,\ldots,n\}$. A closed-loop state trajectories of the representative room is illustrated in Figure~\ref{Fig2} right.

\bibliographystyle{alpha}
\bibliography{biblio}

\end{document}